\makeatletter
\newcommand{\rom}[1]{\romannumeral #1}
\newcommand{\Rom}[1]{\expandafter\@slowromancap\romannumeral #1@}
\makeatother

\documentclass[journal]{IEEEtran}
\usepackage{subcaption}
\usepackage{amsmath}
\usepackage{amsthm}
\usepackage{changepage}
\usepackage{tabularx}
\usepackage{amssymb}
\usepackage{graphicx}
\usepackage{cite}
\usepackage{algorithm,algcompatible}
\usepackage{booktabs,ctable,threeparttable}
\usepackage{algpseudocode}
\usepackage{xpatch}
\usepackage{multirow}
\usepackage{xcolor,colortbl}
\definecolor{lightgray}{gray}{0.85}

\usepackage{array,etoolbox}
\preto\tabular{\setcounter{magicrownumbers}{0}}
\newcounter{magicrownumbers}
\newcommand\rownumber{\stepcounter{magicrownumbers}\arabic{magicrownumbers}}

\newtheorem{theorem}{Proposition}

\newtheorem{theorem2}{Theorem}
\newtheorem{lemma}{Lemma}
 \newtheorem{defi}{Definition}

\begin{document}

\title{Cross-Layer Energy Efficient Resource Allocation in PD-NOMA based H-CRANs: 
\linebreak Implementation via GPU }

\author{Ali Mokdad, Paeiz Azmi, \IEEEmembership{Senior Member, IEEE}, Nader Mokari, \IEEEmembership{Member, IEEE}, Mohammad Moltafet, and Mohsen Ghaffari-Miab,  \IEEEmembership{Member, IEEE}

\thanks{
The authors are with the Department of Electrical and Computer Engineering,
Tarbiat Modares University, Tehran, Iran (ali.mokdad@modares.ac.ir, pazmi@modares.ac.ir, nader.nmy@gmail.com, m.moltafet@modares.ac.ir,  mghaffari@modares.ac.ir).

%
%
%
%
}
}
\maketitle

\begin{abstract}
In this paper, we propose a cross layer energy efficient resource allocation and  remote radio head (RRH) selection
algorithm for heterogeneous traffic in  power domain - non-orthogonal multiple access (PD-NOMA) based heterogeneous cloud radio access networks (H-CRANs). The main aim is to maximize the EE of the elastic users subject to the average delay constraint of the streaming users and the constraints, RRH selection, subcarrier, transmit power and successive interference cancellation.  The considered optimization problem is non-convex, NP-hard and intractable. To solve this problem, we transform the fractional objective function into a subtractive form. Then,
we utilize successive convex approximation approach. Moreover, in order to increase the processing speed, we introduce a framework for accelerating the successive convex approximation for low complexity with the Lagrangian method on graphics processing unit. 
Furthermore, in order
to show the optimality gap of the proposed successive convex approximation approach,
we solve the proposed optimization problem by applying an optimal method based on the monotonic optimization.
Studying different  scenarios show that by using both PD-NOMA technique and H-CRAN, the system energy efficiency is improved.
\end{abstract}

\begin{IEEEkeywords}
Heterogeneous traffic, PD-NOMA, remote radio head selection,  graphics processing unit.
\end{IEEEkeywords}
\IEEEpeerreviewmaketitle

\section{Introduction}\label{sec1}

\subsection{State of the Art}
\IEEEPARstart{I}{n}
 next cellular communication systems, power domain -
 non-orthogonal multiple access (PD-NOMA) is a novel multiple access scheme which is a promising candidate for the fifth generation (5G) cellular communication systems \cite{docomo20145g}. PD-NOMA multiplexes different users symbols by applying the superposition coding (SC) technique at the transmitter side, while at the receiver side the successive interference cancellation (SIC) technique is applied to recover back the multiplexed symbols \cite{docomo20145g}.
 
Heterogeneous cloud radio access network (H-CRAN) is a novel architecture which is proposed as a promising technology for next cellular communication systems
  \cite{peng2015energy}.
 H-CRAN combines heterogeneous cellular network (HCN) with cloud radio access network (C-RAN). In addition, H-CRAN covers the advantages of C-RAN and HCN at the same time \cite{peng2015energy}. The main subsystems of the H-CRAN architecture are the baseband unit (BBU) pool, fiber links and remote radio heads (RRHs) where one of the RRHs is a high power node (HPN) and the others are low power nodes (LPNs). Instead of the processing that is distributed at the base stations (BSs) in the HCN, a centralized signal processing is applied in the BBU pool which reduces the manufacturing and operating cost. Moreover, a cooperation between different RRHs is permitted due to the centralized signal processing, thus spectrum efficiency and link reliability are improved.
 The RRHs compress and forward the received signals from the user to the BBU pool via high bandwidth and low latency fiber links \cite{peng2015energy}. 
Therefore, H-CRANs improve the users quality of service (QoS), the spectral efficiency (SE) of the system and increase the network architecture  flexibility. Moreover, H-CRANs decrease the power consumption of the system, and PD-NOMA technique improves the system throughput, SE, and energy efficiency (EE) of  the fifth generation (5G) cellular communication systems.
In order to cover the advantages of H-CRAN and PD-NOMA technique at the same time, we consider PD-NOMA based H-CRAN system.

Due to the enormous increase in mobile data traffic and the complexity of the proposed technologies including PD-NOMA and H-CRAN, a high computational processing is needed where the conventional methods can not tackle this issue. Therefore, we seek toward a new processing method which accelerates the processing  time.
Graphics Processing Unit (GPU), due to the advantage of its massive number of cores and its parallelism directives, handles the  works with parallel data
 \cite{poole2012introduction, feki2014porting, zygiridis2013high, masumnia2016evaluation, rostami2017fast}. 
 Accelerating applications and simulations with using GPUs has turned out to be progressively well-known from 2006 \cite{van2012gpu}. OpenACC is an open GPU directives standard which makes GPU programming simple and portable over the parallel multi-core processors \cite{ poole2012introduction}. In \cite{kumar2013communication}, a communication optimization for multi GPU implementation of Smith-Waterman Algorithm is investigated. 
 In \cite{masumnia2016evaluation}, stochastic finite-difference time domain
method is investigated on GPU by employing OpenACC application program interface (API).

\subsection{Related Works}
During the past decade, numerous energy efficient (EE), BS selection and cross layer resource allocation problems  for OFDMA systems are investigated 
 \cite{peng2015energy, sun2013energy, wang2012optimal, gao2011game, amzallag2013cell, chu2014energy, mokari2010cross}.
  Furthermore, different PD-NOMA systems are studied  
\cite{ding2014impact, parida2014power, lei2015joint, mokdad2016radio, moltafet2017Radio}. 

 In \cite{sun2013energy}, the EE orthogonal frequency division multiplexing (OFDM) relay system is developed where both the transmit and circuit power consumptions are considered. The EE power allocation for OFDM based cognitive radio networks 
  is investigated in \cite{wang2012optimal}. BS or cell selection for the mobile user is investigated in 
  \cite{ amzallag2013cell, chu2014energy}.  
  In \cite{mokari2010cross}, a cross layer resource allocation
scheme for OFDMA systems is investigated.     In \cite{peng2015energy}, the EE resource allocation in H-CRANs is studied, where RRHs are basically utilized to supply high data rates for users with high quality of service (QoS) requirements, while HPN is created to ensure the  coverage and serve users with low QoS requirements. In \cite{peng2015energy}, the number of RRHs is supposed to be sufficiently large, then the considered overall EE optimization problem of the H-CRAN system is approximated to EE optimization problem for only one RRH.

In \cite{ding2017survey},  a comprehensive overview of the latest NOMA research and innovations as well as their applications are summarized and discussed.
  In \cite{ding2014impact}, the effect of user pairing on the performance of PD-NOMA 
   systems is investigated. A power allocation in OFDM-NOMA system is studied in \cite{parida2014power}, where a single BS is taken into consideration.
     In \cite{lei2015joint}, joint power and channel allocation for PD-NOMA in 5G downlink cellular systems by considering one BS is developed.
     In \cite{mokdad2016radio, moltafet2017Radio}, the radio resource allocation for HCNs based on PD-NOMA is studied.
    In \cite{mokdad2017robust}, robust radio resource allocation for a cellular system based on PD-NOMA is investigated. 
     
To the best of our knowledge, cross layer resource allocation and RRH selection problems neither for systems based on PD-NOMA technique nor for H-CRAN have been investigated yet. As well, resource allocation for H-CRAN systems neither based on PD-NOMA  nor with heterogeneous traffic have been studied so far. Moreover, successive convex approximation for low complexity (SCALE) \cite{papandriopoulos2009scale} with the Lagrangian method has not been analyzed on GPU using OpenACC API yet.

\subsection{Contributions}
In our work, we consider a cross layer EE radio resource allocation and RRH selection problem for heterogeneous traffic in PD-NOMA based H-CRANs. In this formulation, two types of traffic are taken into account, elastic traffic and streaming traffic. In our design, first, the radio resources are assigned to the streaming traffic users in a way that the streaming users QoS constraints are satisfied. Thereafter, the remaining radio resources are assigned to the elastic traffic users. The optimization problem is to maximize the energy efficiency of the elastic users where the total power consumption is partitioned to three parts: 1) the power consumption in the fiber links depending on the active RRHs, 2) the power consumption of RRHs and 3) the circuit power consumption \cite{peng2015energy}. Moreover, due to utilizing the PD-NOMA technique more than one user can be allocated at the same subcarrier and each user can be served by only one RRH. The considered EE optimization problem is non-convex, intractable, and NP-hard. 
Therefore,
we solve the considered optimization problem by applying the successive convex approximation (SCA) method.
Therefore, in our paper, we focus on  both resource allocation and remote radio head selection. Then, due to the different factors taken into account which are from power allocation, subcarrier allocation and remote radio head selection, and at the same time the enormous increase in mobile data traffic, a high computational processing is
needed where the conventional methods can not tackle this issue. Moreover, increasing the number of variables in the system which means increasing the number of parameters is beneficial since it makes the system more flexible  in allocating the energy efficiency which helps in maximizing the energy efficiency of the system.
 Thus, 
to accelerate the processing speed, we introduce a framework for  SCALE with the Lagrangian method on GPU and we run the proposed optimization problem on GPU by utilizing OpenACC API.
Moreover, in order to  evaluate the optimality gap of the proposed solution, we solve the considered optimization problem by applying an optimal algorithm based on the monotonic optimization \cite{zappone2016framework,zappone2017globally,moltafet2018optimal}.
Simulation results confirm that the energy efficiency performance of the H-CRAN based on the PD-NOMA method is approximately 14\% more than the systems based on orthogonal multiple access (OMA) where only one user can be selected on each subcarrier. Moreover, simulation results show that the system energy efficiency  in H-CRAN scenario is enhanced compared to the conventional, C-RAN, HCN and 1-tier HPN scenarios.

The key contributions of this paper are summarized as follows:

\begin{itemize}
\item
 We propose a cross layer EE radio resource allocation and RRH selection algorithm for heterogeneous traffic in PD-NOMA based H-CRANs.
\item
We prove the convergence of the SCA approach for the cross layer EE radio resource allocation and RRH selection in PD-NOMA based H-CRANs and we highlight on the performance improvements of the NOMA technique.
\item
We solve the considered optimization problem by applying the monotonic optimization method. First, we 
transform the  optimization
problem to a monotonic one in a canonical form, then we obtain the 
solution by applying the polyblock algorithm.
\item
We introduce a framework for accelerating SCALE with the Lagrangian method on GPU and we run the proposed  optimization problem by using OpenACC API on GPU. 

\end{itemize}

\subsection{Paper Organization}
The reminder of this paper is organized as follows. In Section \ref{sec2}, we describe the system model and problem formulation of our design. 
The transformation of the fractional objective function problem to a problem with an objective function with subtractive form is introduced in Section \ref{sec4}. 
The proposed approaches to solve the equivalent cross layer EE resource allocation and RRH selection problem are presented in Section \ref{sec5}. Computational complexity of the proposed solution methods are studied in Section  \ref{Computational Complexity}. Distributed solution and signalling overhead of both the centralized and distributed solutions are investigated in Section   \ref{Distributed solution and signalling overhead}. A framework for accelerating the general SCALE with the Lagrangian method using GPU is proposed in Section \ref{framework}. 
The performance of the proposed algorithm and our system model through different numerical experiments are examined in Section \ref{sec6}. Lastly, we conclude the paper in Section \ref{sec7}.

\section{System Model and Problem Formulation}\label{sec2}
\subsection{System Model}
We consider a two tier downlink H-CRAN, where a typical illustration example of this network is presented in Fig. \ref{fig_Drawing2}. As well, the proposed cross layer with RRH selection system  in PD-NOMA H-CRANs  is shown in Fig. \ref{fig_queue2}. In this network, $M_f$ LPN RRHs and one HPN RRH cover the desired coverage area sharing the available radio spectrum.
Table \ref{table-00} summarizes the parameters and symbols used
in the system model and problem formulation.

\begin{table}[h]
	\centering
	\caption{Table of symbols used in the system model.}
	\label{table-00}
	\begin{tabular}{| c| l|l|l|}
		
 \hline
 Symbol & Definition / Description  \\ [0.5ex] 
 \hline\hline
 $M_f$  & Number of LPN RRHs  \\ 
 \hline
$\mathcal{M}=\{0, 1, 2,...,M_f\}$ & RRHs set  \\
 \hline
$\mathcal{M}_f=\{1, 2,..., M_f\}$ &  LPN RRHs set  \\
 \hline
$M$  & Number of all RRHs  \\
 \hline
$\mathcal{K}=\{1,2,...,K\}$ & Users set  \\
 \hline
 $\mathcal{K}^s=\{1,2,...,K^s\}$ & Streaming users set  \\
 \hline
 $\mathcal{K}^e=\{1,2,...,K^e\}$ & Elastic users set  \\
 \hline
 $K^s$ & Number of streaming users  \\
 \hline
 $K^e$ & Number of elastic users   \\
 \hline
 $K$ & Number of all users \\
 \hline
 $l$  & Number of users that can be allocated on\\
 & each subcarrier  \\
 \hline
 $B$ & System bandwidth  \\
 \hline
$N$ & Number of subcarriers  \\
 \hline
 $B_{n}$ & Subcarrier bandwidth  \\
 \hline
$\mathcal{N}=\{1,2,...,N\}$ & Subcarriers set  \\
 \hline
 $h_{m,k}^{(n)}$ & Channel gain  from RRH $m$ to user $k$ over\\
 &  subcarrier $n$ \\
 \hline
 $s_{m,k}^{(n)}$ & Information signal for the $k^{th}$ user \\
 \hline
 $p_{m,k}^{(n)}$ & Transmit power from RRH $m$ to user $k$  \\
 & over subcarrier $n$  \\
 \hline
 $\rho_{m,k}^{(n)}$ & User and subcarrier allocation indicator  \\
 \hline
$A_{m,k}$ & User and RRH allocation indicator  \\
 \hline
$ \gamma _{m,k}^{(n)}$ & SINR of user $k$ on subcarrier $n$ in RRH $m$  \\
 \hline
$\sigma_{m,k}^{(n)}$   & Noise power at user $k$ in RRH $m$ over\\
&  subcarrier $n$  \\
 \hline
 $I_{m,k}^{(n)}$ & Received interference power from the\\
 &  multiplexed users and other RRHs  \\
 \hline
 $r_{m,k}^{(n)}$ & Rate of user $k$ over subcarrier $n$ in RRH $m$  \\
 \hline
 $r_{k}$ &  Full achievable rate of the user $k$  \\
 \hline
 $w_{m,k} $ & Priority weight of the user $k$ in RRH $m$  \\
 \hline
 $R$ &  Total weighted sum rate of the elastic users  \\
 \hline
$P_f^L$ & LPN RRH fiber link power consumption  \\
 \hline
 $P_f^H$ & HPN RRH fiber link power consumption  \\
 \hline
$\eta_m$ & Efficiency of the power amplifier in RRH $m$  \\
 \hline
$P_c^L$  & LPN RRH circuit power consumption    \\
 \hline
$P_c^H$  & HPN RRH circuit power consumption    \\
 \hline
 $P$ & Total power consumption of the elastic users  \\
 \hline
 $E$ & Overall energy efficiency for the H-CRAN  \\
 \hline
$\lambda_k$ & Arrival rate  \\
 \hline
$T_k$ & Desired maximum delay requirement  \\
 \hline
 $q_k$ & Average queue length  \\
 \hline
$p_m^{\text{max}}$ & RRH $m$ maximum allowable transmit power \\
 \hline
$p_{m,k}^{(n),\text{mask}}$ & Transmit power spectral mask for user $k$ \\
 \hline
$\overline{X_k}$ & Average time that user $k$ waits in the queue \\
& in addition to the service time  \\
 \hline 
$\overline{X_k^2}$ & Second moment of the service time  \\
 \hline
 $\overline{z}$ & Packet size  \\
 \hline
$\varrho_1$, $\varrho_2$, $\xi$, $\varpi_1$ and $\varpi_2$&   small positive numbers   \\
 \hline
$i$ &  Index of the iterative algorithm  \\
 \hline
$\boldsymbol{\xi'}$,  $\boldsymbol{\zeta'}$, $\boldsymbol{\vartheta}$,  $\boldsymbol{\vartheta'}$ and $\boldsymbol{\tilde{\zeta}}'$ &  Lagrangian multipliers vectors  \\
 \hline
$P_c^M$ & MBS static circuit power consumption   \\
 \hline
$P_c^P$ & PBS static circuit power consumption  \\
 \hline
$\eta_0$ & Power efficiency for each MBS or PBS  \\ 
 \hline
\end{tabular}
\end{table}

 The RRHs set is denoted by $\mathcal{M}=\{0, 1, 2,...,M_f\}$, where $0$ is the index of the HPN RRH and $\mathcal{M}_f=\{1, 2,..., M_f\}$ is the set of the LPN RRHs. $M=M_f+1$ is the number of all RRHs. We denote the set of all users by $\mathcal{K}=\{1,2,...,K\}$. The users set is split into two sets: 1) streaming users set $\mathcal{K}^s=\{1,2,...,K^s\}$ and 2) elastic users set $\mathcal{K}^e=\{1,2,...,K^e\}$. The number of streaming users and elastic users are equal to $K^s=\vert \mathcal{K}^s\vert$ and $K^e=\vert \mathcal{K}^e\vert$, respectively. Therefore, $\mathcal{K}=\mathcal{K}^e \cup \mathcal{K}^s$ and the number of all users is $K=K^s+K^e$. Due to the PD-NOMA technique, over each subcarrier in RRH $m$, $l$ users can be allocated where $l \leq K$. 
\begin{figure}[t]
\centering
\includegraphics[width=0.8\columnwidth]{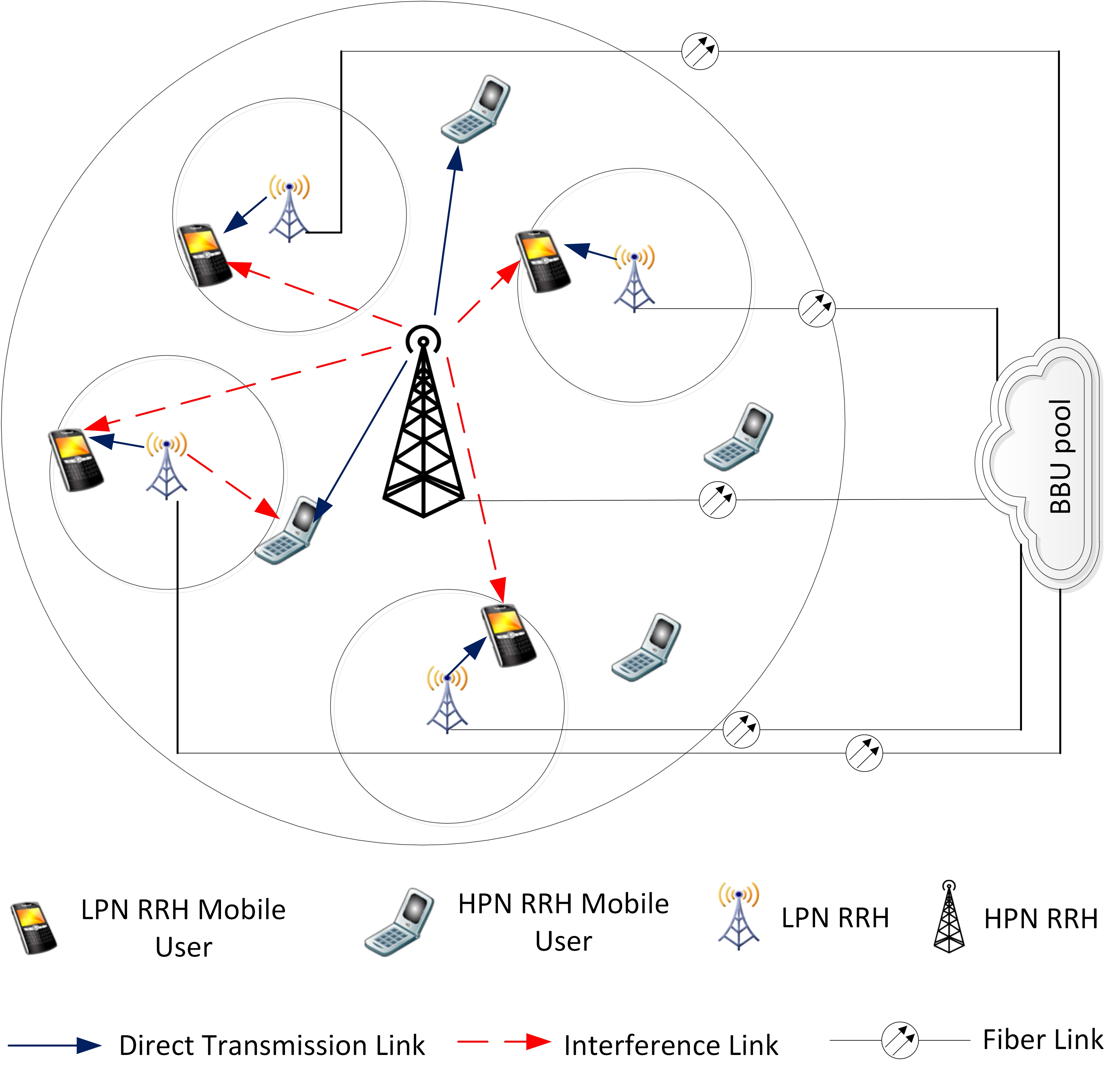}
\caption{A two-tier H-CRAN consisting of one HPN RRH and set of LPN RRHs.}
\label{fig_Drawing2}
\end{figure}
\begin{figure}[t]
\centering
\includegraphics[width=0.9\columnwidth]{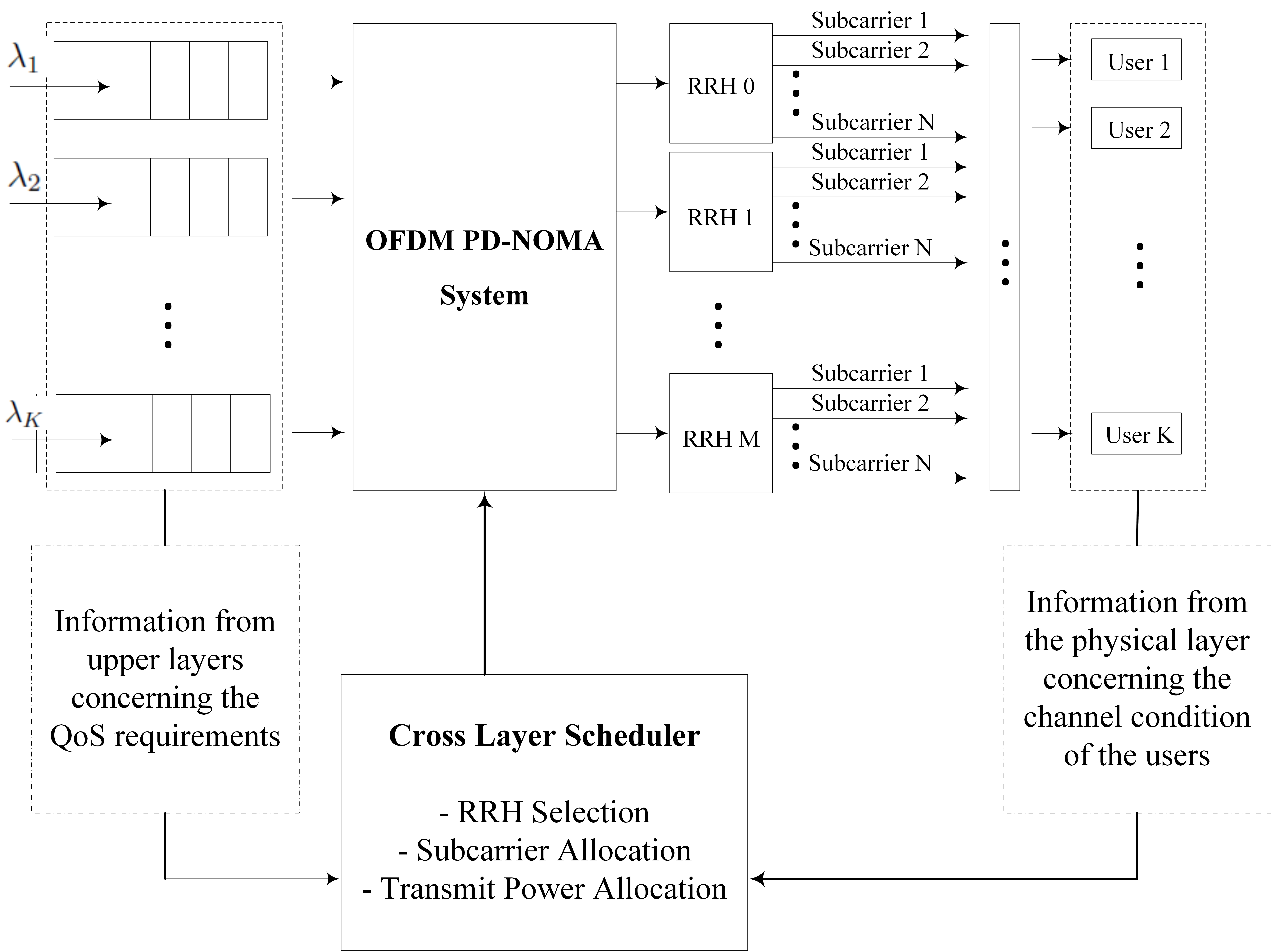}
\caption{Scheduling model for the H-CRAN system.}
\label{fig_queue2}
\end{figure}
In this system model, we suppose the system bandwidth is equal to $B$ partitioned to $N$ subcarriers with bandwidth $B_{n}=B/N$ and the subcarriers set is denoted by $\mathcal{N}=\{1,2,...,N\}$.  $h_{m,k}^{(n)}$ denotes the channel gain from RRH $m$ to user $k$ over subcarrier $n$ and $\Gamma_{m,k}^{(n)}=\vert h_{m,k}^{(n)}\vert^2$.
Due to using the PD-NOMA technique, signals of users with better channel condition is considered as noise while the signals of users with weaker channel condition can be successfully decoded and removed during the decoding process \cite{ding2015cooperative, liu2015cooperative, ding2014performance}. Then, the RRH $m$ transmits $\sum_{k \in \mathcal{K}}A_{m,k}^{(n)}\rho_{m,k}^{(n)}\sqrt{p_{m,k}^{(n)}}s_{m,k}^{(n)}$ over subcarrier $n$ where $s_{m,k}^{(n)}$ is the information signal for the $k^{th}$ user from RRH $m$ over subcarrier $n$,  $p_{m,k}^{(n)}$ represents the transmit power from RRH $m$ to user $k$ over subcarrier $n$ and $\rho_{m,k}^{(n)}$ is a binary variable as user and subcarrier allocation indicator where $\rho_{m,k}^{(n)}=1$ if user $k$ is allocated over the subcarrier $n$ in RRH $m$ and equal to zero otherwise.

As well, $A_{m,k}$ is a binary variable as user and RRH allocation indicator where $A_{m,k}=1$ if user $k$ is served by RRH $m$ and equal to zero otherwise.
Then we denote  $\boldsymbol{\rho}_{m,k}=[\rho_{m,k}^{(1)},\rho_{m,k}^{(2)},...,\rho_{m,k}^{(N)}]$, $\boldsymbol{\rho}_m=[\boldsymbol{\rho}_{m,1},\boldsymbol{\rho}_{m,2},...,\boldsymbol{\rho}_{m,K}]$ and $\boldsymbol{\rho}=[\boldsymbol{\rho}_0,\boldsymbol{\rho}_1,...,\boldsymbol{\rho}_{M_f}]$. Moreover, we denote $\textbf{p}_m^{(n)}=[p_{m,0}^{(n)},p_{m,1}^{(n)},...,p_{m,K}^{(n)}]$, $\textbf{p}^{(n)}=[\textbf{p}_0^{(n)},\textbf{p}_1^{(n)},...,\textbf{p}_{M_f}^{(n)}]$, $\textbf{p}_{m,k}=[p_{m,k}^{(1)},p_{m,k}^{(2)},...,p_{m,k}^{(N)}]$, $\textbf{p}_m=[\textbf{p}_{m,1},\textbf{p}_{m,2},...,\textbf{p}_{m,K}]$ and $\textbf{p}=[\textbf{p}_0,\textbf{p}_1,...,\textbf{p}_{M_f}]$.

As such, the signal to interference plus noise ratio (SINR) of user $k$ over subcarrier $n$ in RRH $m$ after performing SIC is $ \gamma _{m,k}^{(n)}= \frac{p_{m,k}^{(n)} \Gamma_{m,k}^{(n)}}{\sigma_{m,k}^{(n)}+I_{m,k}^{(n)}}$ where $\sigma_{m,k}^{(n)}$ is the noise power at user $k$ in RRH $m$ over subcarrier $n$ and  $I_{m,k}^{(n)}=\sum_{i \in \mathcal{K}, \Gamma_{m,k}^{(n)} \leq \Gamma_{m,i}^{(n)}, i \neq k }A_{m,i}^{(n)}\rho_{m,i}^{(n)}p_{m,i}^{(n)} \Gamma_{m,k}^{(n)}+\sum_{j \in \mathcal{M}/{\{m\}}}\sum_{i \in \mathcal{K}}A_{j,i}^{(n)}\rho_{j,i}^{(n)}p_{j,i}^{(n)} \Gamma_{j,k}^{(n)}$ is the received interference power from the multiplexed users at the same subcarrier and other RRHs.

  Based on information theory, in a  PD-NOMA based system,  user $k$ can successfully detect the
	signals of user $k'$ which has less SINR than that of user $k$, if the SINR of user $k'$  at user $k$ is higher
	than its own SINR \cite{ding2014impact, tse2005fundamentals}. Therefore, mathematically we have  
$\gamma _{m,k}^{(n)}(k') \geq \gamma _{m,k'}^{(n)}(k')$,
 where $\gamma _{m,k}^{(n)}(k')$ is the  SINR of user $k'$  at user $k$ and $\gamma _{m,k'}^{(n)}(k')$ is the SINR of user $k'$. Consequently, from the SINR definition, we have 
  $\frac{p_{m,k'}^{(n)} \Gamma_{m,k}^{(n)}}{\sigma_{m,k}^{(n)}+I_{m,k}^{(n)}} \geq \frac{p_{m,k'}^{(n)} \Gamma_{m,k'}^{(n)}}{\sigma_{m,k'}^{(n)}+I_{m,k'}^{(n)}}$,
where it is equivalent to 
\begin{equation}
\label{SIC_linear}
\begin{split}
\Omega_{m,k,k'}^{(n)}(\textbf{A}, \boldsymbol{\rho}, \textbf{p})=&
\Gamma_{m,k'}^{(n)}\sigma_{m,k}^{(n)}-\Gamma_{m,k}^{(n)}\sigma_{m,k'}^{(n)}+\\&
\Gamma_{m,k'}^{(n)}\sum_{j \in \mathcal{M}/{\{m\}}}\sum_{i \in \mathcal{K}}A_{j,i}^{(n)}\rho_{j,i}^{(n)}p_{j,i}^{(n)} \Gamma_{j,k}^{(n)}-\\
&
\Gamma_{m,k}^{(n)}\sum_{j \in \mathcal{M}/{\{m\}}}\sum_{i \in \mathcal{K}}A_{j,i}^{(n)}\rho_{j,i}^{(n)}p_{j,i}^{(n)} \Gamma_{j,k'}^{(n)} \leq 0.
\end{split}
\end{equation}

The rate of user $k$ over subcarrier $n$ in RRH $m$ is adopted by $r_{m,k}^{(n)}(\textbf{p}^{(n)})=\log_2(1+\gamma _{m,k}^{(n)}(\textbf{p}^{(n)}))$.
Then, the full achievable rate of the user $k$ is expressed as $r_{k}(\textbf{A},\boldsymbol{\rho},\textbf{p})= \sum_{m \in \mathcal{M}}A_{m,k}w_{m,k}\sum_{n \in \mathcal{N}} \rho_{m,k}^{(n)}r_{m,k}^{(n)}(\textbf{p}^{(n)})$,
where $w_{m,k} \in [0,1]$ is a priority weight of the user $k$ in RRH $m$. By regulating these weights, the behavior of proportional fairness between users can be enforced and a trade-off between the user's rate can be adopted and different QoSs or importance levels can be placed by the operator \cite{wong2008optimal,song2005cross,liu2003opportunistic}. Therefore, the total weighted sum rate of the elastic users can be calculated by $R({\textbf{A},\boldsymbol{\rho},\textbf{p}})=\sum_{m \in \mathcal{M}}\sum_{k \in \mathcal{K}^e} A_{m,k}w_{m,k}\sum_{n \in \mathcal{N}}\rho_{m,k}^{(n)}r_{m,k}^{(n)}(\textbf{p}^{(n)})$.

The effect of the data rate change on the power consumption of the fronthaul and the circuit power consumption  is neglected since it is rather small compared with the transmit power of RRHs, circuit power consumption and the power consumption in the fiber links. Moreover, the energy consumption of air conditioning is avoided. Therefore, we suppose that   the power consumption in the fiber links  and the circuit power consumption are fixed to constant values \cite{peng2016energy, wang2017joint, li2016energy, peng2015energy, schimuneck2017adaptive}.
Thus, as mentioned before, the total power consumption of the system consists of three parts: 1) the power consumption of the fiber links where the power consumption of each LPN RRH and HPN RRH fiber links are equal to $P_f^L$  and  $P_f^H$, respectively, 2) the power consumption at RRHs where the power consumption at each RRH $m$ is equal to $\eta_m\sum_{k \in \mathcal{K}}\sum_{n \in \mathcal{N}}\rho_{m,k}^{(n)}p_{m,k}^{(n)}$ where $\eta_m$ is the efficiency of the power amplifier in each RRH and 3) the circuit power consumption for each LPN RRH $m$ and HPN RRH is equal to $P_c^L$ and $P_c^H$, respectively \cite{peng2015energy}. Therefore, the total power consumption of the elastic users is expressed as $P({\textbf{A},\boldsymbol{\rho},\textbf{p}})=P_f^H+M_fP_f^L+\eta_m\sum_{m \in \mathcal{M}}\sum_{k \in \mathcal{K}^e}A_{m,k}\sum_{n \in \mathcal{N}}\rho_{m,k}^{(n)}p_{m,k}^{(n)}+M_fP_c^L+P_c^H$.
Thus, the overall energy efficiency performance for the H-CRAN which consists of one HPN RRH and $M_f$ LPN RRHs is defined as $E=\frac{R({\textbf{A},\boldsymbol{\rho},\textbf{p}})}{P({\textbf{A},\boldsymbol{\rho},\textbf{p}})}$.
Moreover,
the packets for each user are first being queued temporarily where a separate queue is maintained for each user then passed to the radio resource allocator \cite{shakkottai2003cross, todini2006wlc46, hui2007cross}. Thus, only one queue is required for each user. Therefore, 
corresponding to each user, we consider the M/G/1 queue model where it is sufficient for our work. This model contributes particular solutions that provides insights into the best model to be chosen  for particular queuing situations \cite{zarakovitis2012power}, and as well, it is very reasonable for modelling different types of traffic with various QoS requirements and it is a single server queuing system with unlimited number of waiting positions, \cite{mokari2010cross} and \cite{hui2008distributive}. Hence, the QoS constraints are forced on the streaming users, where we assume that the arrival traffic for user $k \in \mathcal{K}_s$ has a Poisson distribution with arrival rate $\lambda_k$ and the desired maximum delay requirement of the streaming user $k \in \mathcal{K}_s$ is $T_k$. The maximum delay requirement corresponding to each packet arrival rate is $T_k=\frac{q_k}{\lambda_k}$ where $q_k$ is the average queue length, \cite{Leonard1975queuing} and \cite{mokari2010cross}.

\subsection{Problem Formulation}
The cross layer EE maximization resource allocation and RRH selection problem in the downlink H-CRAN can be mathematically formulated as follows

\begin{equation}
\label{eq:EE_Problem_formulation} 
\begin{split} 
&\max_{\boldsymbol{\rho},\textbf{p}, \textbf{A}} \hspace{0.25 cm} 
O1: E=\frac{R({\textbf{A},\boldsymbol{\rho},\textbf{p}})}{P({\textbf{A},\boldsymbol{\rho},\textbf{p}})},  \\
& s.t.  \hspace{0.5 cm} C1: \sum_{k \in \mathcal{K}}\rho_{m,k}^{(n)}\leq l, \forall m \in \mathcal{M}, n \in \mathcal{N}, \\
&  \hspace{1 cm} C2: \rho_{m,k}^{(n)} \in \{0,1\},  \forall m \in \mathcal{M}, n \in \mathcal{N}, k \in \mathcal{K},  \\
& \hspace{1 cm} C3: \sum_{k \in \mathcal{K}} \sum_{n \in \mathcal{N}} A_{m,k}\rho_{m,k}^{(n)}p_{m,k}^{(n)}\leq p_m^{\text{max}}, \forall m \in \mathcal{M}, \hspace{0.1cm}\\
 & \hspace{1 cm} C4: 0\leq p_{m,k}^{(n)}\leq p_{m,k}^{(n),\text{mask}}, \forall m \in \mathcal{M}, n \in \mathcal{N}, k \in \mathcal{K},  \\
& \hspace{1 cm} C5: \sum_{m \in \mathcal{M}}A_{m,k}\leq 1, \forall k \in \mathcal{K},\\
& \hspace{1 cm} C6: A_{m,k} \in \{0,1\},  \forall m \in \mathcal{M}, k \in \mathcal{K},\\
& \hspace{1 cm} C7: \overline{X_k} \leq T_k,  \forall k \in \mathcal{K}^s, \\
& \hspace{1 cm} C8: A_{m,k}A_{m,k'}\rho_{m,k}^{(n)}\rho_{m,k'}^{(n)}\Omega_{m,k,k'}^{(n)}(\textbf{A}, \boldsymbol{\rho}, \textbf{p}) \leq 0,\\
& \hspace{1.5 cm}  \forall m \in \mathcal{M}, n \in \mathcal{N}, k, k' \in \mathcal{K}, \Gamma _{m,k'}^{(n)} \leq \Gamma _{m,k}^{(n)}, k \neq k',
\end{split}
\end{equation}
where $O1$ represents the total energy efficiency for the elastic users. The constraints $C1$ and $C2$ guarantee the PD-NOMA technique assumption on each subcarrier. 
The constraint $C1$ indicates that maximum $l$ users can be allocated at the same subcarrier. Therefore, when $l=1$, the system will be equivalent to OFDMA system where at most one user can be allocated to each subcarrier. Then, for example if we have 3 users and $\rho_{m,1}^{(n)}=1$, $\rho_{m,2}^{(n)}=0$ and $\rho_{m,3}^{(n)}=1$, then only the users 1 and 2 are allocated on subcarrier $n$ in RRH $m$.
The constraints $C3$ and $C4$ represent the total transmit power limits for each RRH and the transmit power spectral masks for each user, respectively where $p_m^{\text{max}}$ is the maximum allowable transmit power which can be transmitted by RRH $m$ and $p_{m,k}^{(n),\text{mask}}$ is the transmit power spectral mask for user $k$ served by RRH $m$ on subcarrier $n$. Furthermore,
 the constraints $C5$ and $C6$ ensure the RRH selection assumption.
 Constraint $C5$ ensures that each user can be served  by only one RRH because if $A_{m,k}=1$ then $A_{m,k'}$ will be equal to zero for any user $k' \neq k$. Furthermore, each user can be allocated to various subcarriers where there is no constraint which limits that. 
  The equation $C7$ defines the streaming users delay constraint where $\overline{X_k}$ is the average time that user $k$ waits in the queue in addition to the service time. Moreover, the constraint $C8$ ensures successful SIC if all $A_{m,k}, A_{m,k'}, \rho_{m,k}^{(n)}$ and $\rho_{m,k'}^{(n)}$ are equal to one. The constraints $C1-C6$, and $C8$ are system constraints while $C7$ is a service constraint.

In order to solve the considered cross layer EE resource allocation and RRH selection optimization problem \eqref{eq:EE_Problem_formulation}, we
convert the delay constraint $C7$ into another constraint which is in terms of physical-layer parameters. The relationship between the scheduled streaming user $k$ rate and its traffic characteristic ($T_k$, $\lambda_k$)  is written as \cite{Leonard1975queuing}

\begin{equation}
\label{relationship btw rate and traffic char}
\overline{X_k}+\frac{\lambda_k \overline{X_k^2}}{2(1-\lambda_k \overline{X_k})} \leq T_k,
\end{equation}
where $\overline{X_k}$ and $\overline{X_k^2}$ denote the average and second moment of the service time at the $k^{th}$ user, respectively \cite{Leonard1975queuing}.

Straightforward mathematical manipulation of (\ref{relationship btw rate and traffic char}) results
in
\begin{equation}\label{Eq6a}
\overline{X_k^2}\leq\frac{2T_k-\overline{X_k}(2+2T_k\lambda_{k})+2\lambda_{k}(\overline{X_k})^2}{\lambda_{k}}.
\end{equation}
Using the fact that \mbox{ $\overline{X_k^2}\geq
(\overline{X_k})^2$} along with (\ref{Eq6a}), we obtain 
\begin{equation}\label{Eq6b}
\lambda_{k}(\overline{X_k})^2-\overline{X_k}(2+2T_k\lambda_{k})+2T_k\geq
0,
\end{equation}
where the effect of the approximation \mbox{ $\overline{X_k^2}\geq
(\overline{X_k})^2$} is tight and there is an ignorable gap between using $\overline{X_k^2}$ and $(\overline{X_k})^2$.
Note that $\lambda_{k} > 0$, therefore, the polynomial in
the left hand side of \eqref{Eq6b} is always greater than or  equal to zero for
\mbox{$\overline{X_k} \ge (\overline{X_k}^*)_{2}$} and
\mbox{$\overline{X_k} \le (\overline{X_k}^*)_{1}$}, where
\mbox{$(\overline{X_k}^*)_{1} < (\overline{X_k}^*)_{2}$} are the roots
of the left hand side polynomial in (\ref{Eq6b}). The roots are
\begin{equation}\label{Eq6c}
(\overline{X_k}^*)_{1,2}=\frac{\left(2+2\lambda_{k}T_{k}
\right)\pm\sqrt{\left(2+2\lambda_{k}T_{k}
\right)^2-8\lambda_{k}T_{k}}}{2\lambda_{k}}.
\end{equation}

As it is seen, both roots are positive. Since we would like that
the average service time, i.e., $\overline{X_k}$ to be small, we
choose the smaller root. Therefore, holding the inequality in
(\ref{Eq6b}) requires that
\begin{equation}\label{Eq6d}
\overline{X_k}\leq\frac{\left(2+2\lambda_{k}T_{k}
\right)-\sqrt{\left(2+2\lambda_{k}T_{k}
\right)^2-8\lambda_{k}T_{k}}}{2\lambda_{k}}.
\end{equation}

Let $\overline{z}$ be a random variable representing the packet size in bits,
therefore, \mbox{$\overline{X_k} =
\frac{\overline{z}}{r_{k} \times B_n}$}. Thus, (\ref{Eq6d}) leads us to
the following necessary condition \cite{mokari2010cross}

\begin{equation}
\label{C9}
C9: r_{k}(\textbf{A}, \boldsymbol{\rho},\textbf{p}) \geq \Psi(\overline{z},T_k,\lambda_k) \text{(bits/s/Hz)}, \forall k \in \mathcal{K}^s,
\end{equation}
where $\Psi(\overline{z},T_k,\lambda_k)=\hat{\Psi}(\overline{z},T_k,\lambda_k)/B_n$ and $\hat{\Psi}(\overline{z},T_k,\lambda_k)=\frac{2\lambda_k\overline{z}}{(2+2\lambda_kT_k)-\sqrt{(2+2\lambda_kT_k)^2-8\lambda_kT_k}}$.

%
%
%

Thus, the considered optimization problem \eqref{eq:EE_Problem_formulation} is reformulated as

\begin{equation}
\label{eq2:EE_Problem_formulation} 
\max_{\boldsymbol{\rho},\textbf{p}, \textbf{A}} \hspace{0.25 cm} 
O1: E=\frac{R({\textbf{A},\boldsymbol{\rho},\textbf{p}})}{P({\textbf{A},\boldsymbol{\rho},\textbf{p}})},  
\hspace{0.5 cm}  s.t.  \hspace{0.5 cm} C1 - C6, C8,C9.
\end{equation}

The optimization problem \eqref{eq2:EE_Problem_formulation} is a non-linear program containing both continuous and integer variables. As well, the optimization problem \eqref{eq2:EE_Problem_formulation} is a NP-hard problem. Therefore, we transform it into an optimization problem with only continuous variables.

Clearly, from $C5$ and $C6$, we obtain that if $A_{m,k}=1$ then $A_{m',k}=0$ $\forall m' \neq m$. Thus, if $p_{m,k}^{(n)}\neq0$ then $p_{m',k}^{(n')}=0$ $\forall m' \neq m$. Therefore, the RRH selection constraints $C5$ and $C6$ are equivalent to 

\begin{equation}
\label{base_station_selection_constraint}
p_{m,k}^{(n)}p_{m',k}^{(n')} = 0, \forall m,m' \in \mathcal{M}, n \in \mathcal{N}, n' \in \mathcal{N}, k \in \mathcal{K}, m\neq m'.
\end{equation}

The constraint \eqref{base_station_selection_constraint} ensures that each user can be at most served by one RRH, since if $p_{m,k}^{(n)}\neq0$ for RRH $m$ then $p_{m',k}^{(n')}=0$ for any RRH $m' \neq m$,
but each user can be allocated to various subcarriers in the same RRH because we may have $p_{m,k}^{(n)}\neq0$ and $p_{m,k}^{(n')}\neq0$ for $n \neq n'$ which means that user $k$ is allocated to  subcarriers $n$ and $n'$, that is because constraint \eqref{base_station_selection_constraint} holds only for different RRHs $m\neq m'$.
As well, for simplicity we suppose that at most three users can be allocated on the same subcarrier, $l=3$. Thus, from constraints $C1$ and $C2$, we obtain that if $p_{m,k}^{(n)}\neq0$, $p_{m,i}^{(n)}\neq0$ and $p_{m,j}^{(n)}\neq0$ for users $k$, $i$ and $j$ then $p_{m,x}^{(n)}=0$ $\forall x \in K$ and $x \neq k \neq i \neq j$. Therefore, the subcarrier allocation constraints $C1$ and $C2$ are equivalent to

\begin{equation}
\begin{split}
\label{subcarrier_constraint}
& p_{m,k}^{(n)}p_{m,i}^{(n)}p_{m,j}^{(n)}p_{m,x}^{(n)} = 0,\\
& \forall m \in \mathcal{M}, n \in \mathcal{N}, k, i, j, x \in \mathcal{K}, k\neq i \neq j\neq x.
\end{split}
\end{equation}

Moreover, the constraints  \eqref{base_station_selection_constraint} and \eqref{subcarrier_constraint} are not compatible with the SCALE method, then the constraints \eqref{base_station_selection_constraint} and \eqref{subcarrier_constraint} are replaced by the following constraints 

\begin{equation}
\label{C12}
\begin{split}
&C10: p_{m,k}^{(n)}p_{m',k}^{(n')} \leq \varrho_1,\\
& \forall m,m' \in \mathcal{M}, n \in \mathcal{N}, n' \in \mathcal{N}, k \in \mathcal{K}, m\neq m',
\end{split}
\end{equation}
and

\begin{equation}
\label{C13}
\begin{split}
& C11:  p_{m,k}^{(n)}p_{m,i}^{(n)}p_{m,j}^{(n)}p_{m,x}^{(n)} \leq \varrho_2, \\
  & \forall m \in \mathcal{M}, n \in \mathcal{N}, k \in \mathcal{K}, i \in \mathcal{K}, j \in \mathcal{K}, x \in \mathcal{K}, k\neq i \neq j\neq x,
 \end{split}
\end{equation}
where $\varrho_1$ and $\varrho_2$ are two small positive numbers. Therefore, the optimization problem \eqref{eq2:EE_Problem_formulation} can be transformed to

\begin{equation}
\label{eq7:power_Problem_formulation} 
\begin{split} 
 &\max_{\textbf{p}} \hspace{0.25 cm} O2: \frac{{R}({\textbf{p}})}{{P}({\textbf{p}})} \\
& s.t.  \hspace{0.5 cm} C4, C10, C11, \\
&\hspace{1 cm} C12: \sum_{k \in \mathcal{K}} \sum_{n \in \mathcal{N}} p_{m,k}^{(n)}\leq p_m^{\text{max}}, \forall m \in \mathcal{M},  \\
&\hspace{1.5 cm} \forall m \in \mathcal{M}, n \in \mathcal{N}, k \in \mathcal{K}, \\
&\hspace{1 cm} C13: r_{k}(\textbf{p}) \geq \Psi(\overline{z},T_k,\lambda_k), \forall k \in \mathcal{K}^s,\\
&\hspace{1 cm} C14: p_{m,k}^{(n)}p_{m,k'}^{(n)}\Omega_{m,k,k'}^{(n)}(\textbf{p}) \leq 0,\\
&\hspace{1.5 cm}  \forall m \in \mathcal{M}, n \in \mathcal{N}, k, k' \in \mathcal{K}, \Gamma _{m,k'}^{(n)} \leq \Gamma _{m,k}^{(n)}, k \neq k',
\end{split}
\end{equation}
where ${R}({\textbf{p}})=\sum_{m \in \mathcal{M}}\sum_{k \in \mathcal{K}^e} w_{m,k}\sum_{n \in \mathcal{N}}{r}_{m,k}^{(n)}(\textbf{p}^{(n)})$, 
${P}({\textbf{p}})=P_f^H+M_fP_f^L+\eta_m\sum_{m \in \mathcal{M}}\sum_{k \in \mathcal{K}^e}\sum_{n \in \mathcal{N}}p_{m,k}^{(n)}+M_fP_c^L+P_c^H$, 
$r_{k}(\textbf{p})=\sum_{m \in \mathcal{M}}w_{m,k}\sum_{n \in \mathcal{N}} {r}_{m,k}^{(n)}(\textbf{p}^{(n)})$,  \\
${r}_{m,k}^{(n)}(\textbf{p}^{(n)})=\log_2(1+\gamma _{m,k}^{''(n)})$, $\gamma _{m,k}^{''(n)}=\frac{p_{m,k}^{(n)} \Gamma_{m,k}^{(n)}}{\sigma_{m,k}^{(n)}+ \overline{I}_{m,k}^{(n)}}$,\\
$\overline{I}_{m,k}^{(n)}=\sum_{i \in \mathcal{K}, \Gamma_{m,k}^{(n)} \leq \Gamma_{m,i}^{(n)},i \neq k }p_{m,i}^{(n)} \Gamma_{m,k}^{(n)}+\sum_{j \in \mathcal{M}/{\{m\}}}\sum_{i \in \mathcal{K}}p_{j,i}^{(n)} \Gamma_{j,k}^{(n)}$ and $\Omega_{m,k,k'}^{(n)}(\textbf{p})=\Gamma_{m,k'}^{(n)}\sigma_{m,k}^{(n)}-\Gamma_{m,k}^{(n)}\sigma_{m,k'}^{(n)}+\Gamma_{m,k'}^{(n)}\sum_{j \in \mathcal{M}/{\{m\}}}\sum_{i \in \mathcal{K}}p_{j,i}^{(n)} \Gamma_{j,k}^{(n)}-
\Gamma_{m,k}^{(n)}\sum_{j \in \mathcal{M}/{\{m\}}}\sum_{i \in \mathcal{K}}p_{j,i}^{(n)} \Gamma_{j,k'}^{(n)}$. The objective function $O2$ is not a concave function and is a fractional function. Hence, the optimization problem \eqref{eq7:power_Problem_formulation} is a non-convex intractable NP-hard optimization problem. Thus, we transform the fractional objective function $O2$ into a non-fractional subtractive function and then  solve the transformed optimization problem.

\section{Optimization Problem Transformation} \label{sec4}
The optimization problem \eqref{eq7:power_Problem_formulation} is a non linear fractional programming problem which can be transformed by utilizing the well-known Dinkelbach method \cite{dinkelbach1967nonlinear}. Let the optimal energy efficiency value of the optimization problem \eqref{eq7:power_Problem_formulation} be 
 $E^*=\frac{R(\textbf{p}^*)}{P(\textbf{p}^*)}$.

\begin{theorem2}
\label{Theorem1}
The optimal energy efficiency value $E^*$ is achieved if and only if
\begin{equation}
\label{theorem1_equation}
\max_{\textbf{p}} \hspace{0.25cm} R(\textbf{p})-E^*P({\textbf{p}})=R({\textbf{p}^*})-E^*P({\textbf{p}^*})=0,
\end{equation}
\end{theorem2}
where $\textbf{p}$ is any feasible solution to satisfy the constraints of the optimization problem \eqref{eq7:power_Problem_formulation}.

\begin{proof}
Theorem 1 is proved in two steps by establishing both the sufficient and necessary conditions

1) Clearly, we have $E^*=\frac{R({\textbf{p}^*})}{P({\textbf{p}^*})}\geq\frac{R({\textbf{p}})}{P({\textbf{p}})}$, where $\textbf{p}^*$ is the optimal solution and $\textbf{p}$ is a feasible solution, which satisfies the constraints of the optimization problem \eqref{eq7:power_Problem_formulation}. Therefore, we have $R({\textbf{p}})-E^*P({\textbf{p}}) \leq 0$ and $R({\textbf{p}^*})-E^*P({\textbf{p}^*})=0$. Thus, we obtain that $\max_{\textbf{p}} \hspace{0.25cm} R({\textbf{p}})-E^*P({\textbf{p}})=0$ and it is achievable with the optimal solution $\textbf{p}^*$. Hence, the sufficient condition of Theorem 1 is proved.

2) The objective function of the transformed optimization problem \eqref{eq7:power_Problem_formulation} is 
$R({\textbf{p}})-E^*P({\textbf{p}})$ and we assume that 
$\textbf{p}^{**}$ is the optimal solution of the transformed objective function. Therefore, 
$R({\textbf{p}^{**}})-E^*P({\textbf{p}^{**}})=0$, 
then we have $R({\textbf{p}})-E^*P({\textbf{p}})\leq R({\textbf{p}^{**}})-E^*P({\textbf{p}^{**}})=0$. Subsequently, 
$\frac{R({\textbf{p}})}{P({\textbf{p}})} \leq E^*$ and $\frac{R({\textbf{p}^{**}})}{P({\textbf{p}^{**}})}=E^*$. Thus, the optimal solution of the transformed objective function are also the optimal solution for the objective function of the optimization problem \eqref{eq7:power_Problem_formulation}. Hence, the necessary condition of Theorem 1 is proved.
\end{proof}

Consequently, the transformed optimization problem of the equivalent cross layer EE resource allocation and RRH selection optimization problem \eqref{eq7:power_Problem_formulation} is written as

\begin{equation}
\label{eq4:Transformed_Energy_Efficiency_Optimization_Problem_star_E_P2} 
\max_{\textbf{p}} \hspace{0.25 cm} 
O3: R({\textbf{p}})-E^*P({\textbf{p}})
 \hspace{0.5 cm} s.t.  \hspace{0.5 cm} C4, C10 - C14.
\end{equation}
Moreover, an equivalent optimization problem of the transformed optimization problem \eqref{eq4:Transformed_Energy_Efficiency_Optimization_Problem_star_E_P2} is represented as

\begin{equation}
\label{eq4:Transformed_Energy_Efficiency_Optimization_Problem_P2} 
 \max_{\textbf{p}} \hspace{0.25 cm} 
O4: R({\textbf{p}})-EP({\textbf{p}})
\hspace{0.5 cm}s.t.  \hspace{0.5 cm} C4, C10 - C14,
\end{equation}
with the following Lemma.

\begin{lemma}
\label{lemma1}
for all feasible $\textbf{p}$ and $E$, $\max_{\textbf{p}} \hspace{0.25 cm} R({\textbf{p}})-EP({\textbf{p}})$ is: 1) strictly monotonic decreasing function with respect to $E$, 2) greater than or equal to zero.
\end{lemma}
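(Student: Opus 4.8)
The plan is to denote $F(E) := \max_{\textbf{p}} R(\textbf{p}) - E P(\textbf{p})$, where the maximization runs over all $\textbf{p}$ feasible for \eqref{eq7:power_Problem_formulation}, and to establish the two claims separately. The structural fact I would rely on throughout is that $P(\textbf{p})$ is strictly positive for every feasible $\textbf{p}$: from its definition $P(\textbf{p})=P_f^H+M_fP_f^L+\eta_m\sum_{m \in \mathcal{M}}\sum_{k \in \mathcal{K}^e}\sum_{n \in \mathcal{N}}p_{m,k}^{(n)}+M_fP_c^L+P_c^H$ it contains the fixed positive fiber-link and circuit power terms, so $P(\textbf{p}) > 0$ regardless of the transmit powers. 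This positivity is what makes claim~1 strict rather than merely weak, so it is essential rather than cosmetic.

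For claim~1 (strict monotonic decrease), I would pick any two values $E_1 < E_2$ and let $\hat{\textbf{p}}$ denote an optimal solution attaining $F(E_2)$, so that $F(E_2)=R(\hat{\textbf{p}})-E_2 P(\hat{\textbf{p}})$. Since $\hat{\textbf{p}}$ is in particular feasible, it is a candidate in the maximization defining $F(E_1)$, whence
\begin{align*}
F(E_1) &\geq R(\hat{\textbf{p}})-E_1 P(\hat{\textbf{p}}) \\
&= \bigl[R(\hat{\textbf{p}})-E_2 P(\hat{\textbf{p}})\bigr] + (E_2-E_1)P(\hat{\textbf{p}}) \\
&= F(E_2) + (E_2-E_1)P(\hat{\textbf{p}}).
\end{align*}
Because $E_2-E_1>0$ and $P(\hat{\textbf{p}})>0$, the last term is strictly positive, giving $F(E_1) > F(E_2)$. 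That is the entire argument for the first part.

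For claim~2 (nonnegativity), I would read ``feasible $E$'' as an achievable energy-efficiency level, i.e. there exists a feasible $\textbf{p}_0$ with $E=R(\textbf{p}_0)/P(\textbf{p}_0)$. Then $R(\textbf{p}_0)-E P(\textbf{p}_0)=0$, and since $\textbf{p}_0$ is one admissible choice in the maximization, $F(E)=\max_{\textbf{p}} R(\textbf{p})-E P(\textbf{p}) \geq R(\textbf{p}_0)-E P(\textbf{p}_0)=0$. Alternatively, one can invoke Theorem~\ref{Theorem1}, which gives $F(E^*)=0$; combined with claim~1, every $E\le E^*$ then satisfies $F(E)\ge F(E^*)=0$.

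I expect the main obstacle to be expository rather than technical: pinning down precisely what ``feasible $E$'' is intended to mean so that claim~2 reads correctly, and making the positivity of $P(\textbf{p})$ explicit so that the inequality in claim~1 is genuinely strict. Once those two points are fixed, both parts are one-line arguments and no further calculation is needed.
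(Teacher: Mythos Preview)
Your proposal is correct and follows essentially the same route as the paper: for part~1 you use the maximizer at the larger $E$ as a feasible candidate at the smaller $E$ and exploit $P(\textbf{p})>0$ to get strictness, and for part~2 you read ``feasible $E$'' as an achieved ratio $R(\textbf{p}_0)/P(\textbf{p}_0)$ and plug that $\textbf{p}_0$ into the max---exactly as the paper does. Your write-up is in fact a bit tidier, since you make the role of $P(\textbf{p})>0$ explicit, whereas the paper writes the first inequality in its chain as strict when it is really only $\geq$ (the strictness coming from the second step).
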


\begin{proof}
Lemma 1 is proved in two steps:

1) Let $E_1$ and $E_2$ be two optimal values for the two optimal solutions $\textbf{p}_1$ and $\textbf{p}_2$, respectively and $E_2>E_1$. Then, we have 
$R({\textbf{p}_1})-E_1P({\textbf{p}_1})>R({\textbf{p}_2})-E_1P({\textbf{p}_2})>R({\textbf{p}_2})-E_2P({\textbf{p}_2}).$
Therefore, $\max_{\textbf{p}} \hspace{0.25 cm} 
 R({\textbf{p}})-EP({\textbf{p}})$ is a strictly monotonic decreasing function with respect to $E$.

2) Let $\tilde{\textbf{p}}$ be a feasible solution. Thus, $\tilde{E}=\frac{R(\tilde{\textbf{p}})}{P(\tilde{\textbf{p}})}$. Therefore, we have $\max_{\textbf{p}} \hspace{0.25 cm} 
 R({\textbf{p}})-\tilde{E} P({\textbf{p}}) \geq R(\tilde{\textbf{p}})-\tilde{E} P(\tilde{\textbf{p}})$. Then $\max_{\textbf{p}} \hspace{0.25 cm} 
 R({\textbf{p}})-EP({\textbf{p}})$ is greater than or equal to zero.
\end{proof}

\section{Solving The Cross Layer EE Resource Allocation and RRH Selection Problem} \label{sec5}

To solve the optimization problem \eqref{eq4:Transformed_Energy_Efficiency_Optimization_Problem_P2}, we 
apply the following iterative algorithm, where $E$ is updated in each iteration.


\begin{equation}
\label{outer_iterative_algorithm}
\overbrace{\textbf{E}^0\rightarrow \textbf{p}^{0}}^{\text{initialization}}
 \longrightarrow\bullet\bullet\bullet\longrightarrow
\overbrace{\textbf{E}^{i}\rightarrow \textbf{p}^{i}}^{\text{Iteration}\hspace{3pt}i} \longrightarrow 
\bullet\bullet\bullet\longrightarrow \overbrace{\textbf{E}^{*}\rightarrow \textbf{p}^{*}}^{\text{Optimal}\hspace{3pt}\text{Solution}}. 
\end{equation}

For utilizing this algorithm, firstly, we have to set an initial value for $E$ denoted by $E^0=0$ then find an initial feasible solution $\textbf{p}^0$ which satisfies the constraints of the optimization  problem \eqref{eq4:Transformed_Energy_Efficiency_Optimization_Problem_P2}. After that, for each iteration the value of $E$ is updated by 
$E^{i+1}=\frac{R({\textbf{p}^i})}{P({\textbf{p}^i})}$,
where for each iteration $i$, $\textbf{p}^i$, is obtained by solving 
the following optimization problem

\begin{equation}
\label{eq4:Transformed_Energy_Efficiency_Optimization_Problem_P2_2} 
 \max_{\textbf{p}} \hspace{0.25 cm} 
O5: R({\textbf{p}})-E^iP({\textbf{p}})
\hspace{0.5 cm} s.t.  \hspace{0.5 cm} C4, C10 - C14,
\end{equation}
The process of this algorithm ends when the number of iterations reaches a predefined value which is feasible for practice or  $R({\textbf{p}^i})-E^{i}P({\textbf{p}^i})\leq\xi$. The output of the last iteration is the optimal solution of the considered optimization problem.

\begin{theorem}
\label{proposition1}
The iterative algorithm \eqref{outer_iterative_algorithm} converges to an optimal solution. 
\end{theorem}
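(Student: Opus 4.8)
The plan is to exploit the Dinkelbach structure already established in Theorem \ref{Theorem1} and Lemma \ref{lemma1}, and to show that the sequence of energy-efficiency values $\{E^i\}$ generated by the iterations in \eqref{outer_iterative_algorithm} is monotonically non-decreasing, bounded above by $E^*$, and therefore convergent, with its limit forced to coincide with $E^*$ by the characterization in Theorem \ref{Theorem1}.

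First I would establish monotonicity of the outer sequence. Since $\textbf{p}^i$ is the maximizer of problem \eqref{eq4:Transformed_Energy_Efficiency_Optimization_Problem_P2_2} at the current value $E^i$, part 2) of Lemma \ref{lemma1} gives $R(\textbf{p}^i)-E^iP(\textbf{p}^i)\geq 0$. Because $P(\textbf{p}^i)>0$ (the total power consumption is strictly positive owing to the fixed fronthaul and circuit terms $P_f^H$, $M_fP_f^L$, $M_fP_c^L$ and $P_c^H$), dividing through yields $E^{i+1}=R(\textbf{p}^i)/P(\textbf{p}^i)\geq E^i$, so $\{E^i\}$ is non-decreasing. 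Next I would bound it from above: by the very definition of $E^*$ as the maximal achievable energy efficiency, every feasible $\textbf{p}$ satisfies $R(\textbf{p})/P(\textbf{p})\leq E^*$, and applying this to the feasible point $\textbf{p}^i$ gives $E^{i+1}\leq E^*$. A monotonically non-decreasing sequence bounded above converges, hence $E^i\to\bar{E}$ for some $\bar{E}\leq E^*$.

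The final and most delicate step is to identify $\bar{E}$ with $E^*$. Define $F(E):=\max_{\textbf{p}} R(\textbf{p})-EP(\textbf{p})$; as a pointwise maximum of affine functions of $E$ over the compact feasible region, $F$ is convex and continuous, and by Lemma \ref{lemma1} it is strictly decreasing with $F(E)\geq 0$, while Theorem \ref{Theorem1} gives $F(E^*)=0$. The key observation is that whenever $E^i<E^*$, strict monotonicity forces $F(E^i)>0$, so the update produces a strict increase $E^{i+1}>E^i$; consequently the sequence cannot stall at any value strictly below $E^*$. Passing to the limit in $R(\textbf{p}^i)-\bar{E}P(\textbf{p}^i)\to F(\bar{E})$ and using continuity of $F$ yields $F(\bar{E})=0$, whereupon the necessary/sufficient equivalence of Theorem \ref{Theorem1} forces $\bar{E}=E^*$ and the accompanying solution to be optimal.

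The hard part will be making this last limiting argument rigorous—specifically, ruling out the pathological possibility that the strictly positive gap $F(E^i)>0$ shrinks to zero asymptotically \emph{without} $E^i$ ever reaching $E^*$. I would handle this precisely through the continuity of $F$ established above: since $F(E^i)=R(\textbf{p}^i)-E^iP(\textbf{p}^i)$ and both $E^i\to\bar{E}$ and the limiting relation give $F(\bar{E})=0$, the unique zero of the strictly decreasing function $F$ is $E^*$, so $\bar{E}=E^*$ follows with no further assumptions. This also justifies the practical stopping criterion $R(\textbf{p}^i)-E^iP(\textbf{p}^i)\leq\xi$, since the residual $F(E^i)$ decreases to zero exactly as $E^i$ approaches the optimum.
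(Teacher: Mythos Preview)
Your proposal is correct and follows essentially the same Dinkelbach-style argument as the paper: monotone increase of $E^i$ via Lemma~\ref{lemma1} and positivity of $P(\textbf{p}^i)$, boundedness by $E^*$, and identification of the limit using Theorem~\ref{Theorem1}. If anything, you are more careful than the paper on the final step---you explicitly invoke continuity of $F(E)=\max_{\textbf{p}}R(\textbf{p})-EP(\textbf{p})$ and the uniqueness of its zero to pin down $\bar{E}=E^*$, whereas the paper simply asserts that for sufficiently many iterations the residual tends to zero; your treatment (combined with $F(E^i)=P(\textbf{p}^i)(E^{i+1}-E^i)\to 0$ from convergence of $\{E^i\}$ and boundedness of $P$) closes that gap cleanly.
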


\begin{proof}
To prove the Proposition \ref{proposition1}, we assume that the energy efficiency of the iterations $i$ and $i+1$ are $E^i$ and $E^{i+1}$, respectively, where both of them are greater than zero and not equal to the optimal solution $E^*$ and $E^{i+1}=\frac{R({\textbf{p}^i})}{P({\textbf{p}^i})}$.
  As well, since $E^*$ is the maximum energy efficiency can be achieved then we have $E^{i+1}<E^*$. Moreover, from Lemma \ref{lemma1}, we can clearly see that $R({\textbf{p}})-EP({\textbf{p}})>0$ if $E$ is not the optimal value. Therefore, we have $R({\textbf{p}^i})-E^iP({\textbf{p}^i})=P({\textbf{p}^i}) \frac{R({\textbf{p}^i})}{P({\textbf{p}^i})}
-E^iP({\textbf{p}^i})=P({\textbf{p}^i})(E^{i+1}-E^i)>0$. Thus, since $P({\textbf{p}^i})$ is always greater than zero then we have $E^{i+1}>E^i$. Therefore, after each iteration the energy efficiency $E$ increases. Moreover, according to Lemma \ref{lemma1}, after each iteration, due to the increasing of $E$, $R({\textbf{p}})-EP({\textbf{p}})$ decreases.
Furthermore, when the updated value of $E$ increases to the achievable maximum value of $E^*$, the optimization problem \eqref{eq4:Transformed_Energy_Efficiency_Optimization_Problem_P2}, with $E^*$ and the optimal condition $R({\textbf{p}^*})-E^*P({\textbf{p}^*})=0$ which is proved in Theorem \ref{Theorem1}, can be solved. Then, the optimal solution $\textbf{p}^*$ for the optimization problem \eqref{eq4:Transformed_Energy_Efficiency_Optimization_Problem_P2} is determined. The iterative algorithm updates $E$ to obtain the optimal value $E^*$. Moreover, when the number of iterations is adequately large it can be shown that $\max_{\textbf{p}} \hspace{0.25 cm} 
 R({\textbf{p}})-EP({\textbf{p}})$ converges to zero and the optimal condition as expressed in Theorem \ref{Theorem1} is attained. Hence, the convergence to the global optimal solution of the outer iterative algorithm is proved \cite{peng2015energy}.
\end{proof}

\subsection{Successive Convex Approximation}
The considered optimization problem \eqref{eq4:Transformed_Energy_Efficiency_Optimization_Problem_P2_2} is non convex.
The SCALE method attempts to solve  non convex problems by exploiting their underlying convexity which is an iterative algorithm that has low complexity. Therefore, the basic idea behind this approach is applying an inequality which achieves a convex tight lower bound for each non convex function. 
Thus, to obtain the convexity of this optimization problem, we use the SCALE approach
  \cite{papandriopoulos2009scale} . It can be demonstrated analytically that the SCALE approach has a convergence to a  local optimum point.
We use the following lower bound  \cite{papandriopoulos2009scale}

\begin{equation} 
\begin{split}
&\hat{\alpha} \log_2 z + \hat{\beta} \leq \log_2(1+z) ,  \\
&\hat{\alpha} = \frac{z_0}{1+z_0} ,
 \hat{\beta} = \log_2(1+z_0) - \frac{z_0}{1+z_0}\log_2 z_0,
 \label{eq:scale_approximation}
 \end{split}
\end{equation}
where it is tight at $z=z_0$. Thus,  user $k$ rate over subcarrier $n$ in RRH $m$ is approximated to 
$\hat{r}_{m,k}^{(n),t}=\hat{\beta}_{m,k}^{(n),t}+\hat{\alpha}_{m,k}^{(n),t} \log_2(\gamma _{m,k}^{''(n),t}),
$
where $\hat{\alpha}_{m,k}^{(n),t}=\frac{\gamma _{m,k}^{''(n),t-1}}{1+\gamma _{m,k}^{''(n),t-1}}$ and $\hat{\beta}_{m,k}^{(n),t}=\log_2(1+\gamma _{m,k}^{''(n),t-1})-\hat{\alpha}_{m,k}^{(n),t}\log_2(\gamma _{m,k}^{''(n),t-1})$. Therefore, the optimization problem \eqref{eq4:Transformed_Energy_Efficiency_Optimization_Problem_P2_2} is rewritten as

\begin{equation}
\label{eq:trans_Power_Allocation_Problem} 
\begin{split} 
&\max_{\textbf{p}} \hspace{0.25 cm} 
\hat{O}5: \sum_{m \in \mathcal{M}}\sum_{k \in \mathcal{K}^e} w_{m,k}\sum_{n \in \mathcal{N}}\hat{r}_{m,k}^{(n)}(\textbf{p}^{(n)})-E^{i}P(\textbf{p})\\
& s.t.  \hspace{0.5 cm} C4, C10 - C12, C14,  \\
& \hspace{1 cm} \hat{C}13:\sum_{m \in \mathcal{M}}w_{m,k} \sum_{n \in \mathcal{N}}  \hat{r}_{m,k}^{(n)}(\textbf{p}^{(n)}) \geq \Psi(\overline{z},T_k,\lambda_k), \\
& \hspace{1 cm} \forall  k \in \mathcal{K}^s.
\end{split}
\end{equation}

The problem \eqref{eq:trans_Power_Allocation_Problem} is also non convex. Therefore, we apply the change of variable $\textbf{p}=\exp(\hat{\textbf{p}})$. Then, we have

\begin{equation}
\label{eq:final_Problem_formulation} 
\begin{split} 
&\max_{\hat{\textbf{p}}} \hspace{0.25 cm} 
\hat{O}5e: \sum_{m \in \mathcal{M}}\sum_{k \in \mathcal{K}^e} w_{m,k}\sum_{n \in \mathcal{N}}\hat{r}_{m,k}^{(n)}(e^{\hat{\textbf{p}}^{(n)}})-E^{i}P(e^{\hat{\textbf{p}}})\\ 
& s.t.   \hspace{0.5 cm} \hat{C}4e: 0\leq e^{\hat{p}_{m,k}^{(n)}}\leq p_{m,k}^{(n),\text{mask}}, \forall m \in \mathcal{M}, n \in \mathcal{N}, k \in \mathcal{K}, \\
& \hspace{1 cm} \hat{C}10e: e^{\hat{p}_{m,k}^{(n)}+\hat{p}_{m',k}^{(n')}} \leq \varrho_1,\\
& \hspace{1 cm}  \forall m,m' \in \mathcal{M}, n \in \mathcal{N}, n' \in \mathcal{N}, k \in \mathcal{K}, m\neq m',\\
& \hspace{1 cm}\hat{C}11e: e^{\hat{p}_{m,k}^{(n)}+\hat{p}_{m,i}^{(n)}+\hat{p}_{m,j}^{(n)}+\hat{p}_{m,x}^{(n)}} \leq \varrho_2, \\
& \hspace{1 cm}\forall m \in \mathcal{M}, n \in \mathcal{N}, k \in \mathcal{K}, i \in \mathcal{K}, j \in \mathcal{K}, x \in \mathcal{K},\\
& \hspace{1 cm} k\neq i \neq j\neq x,\\
& \hspace{1 cm} \hat{C}12e: \sum_{k \in \mathcal{K}} \sum_{n \in \mathcal{N}}  e^{\hat{p}_{m,k}^{(n)}}\leq p_m^{\text{max}}, \forall m \in \mathcal{M}, \\
& \hspace{1 cm} \forall m \in \mathcal{M}, n \in \mathcal{N}, k \in \mathcal{K}, \\
& \hspace{1 cm} \hat{C}13e:\sum_{m \in \mathcal{M}}w_{m,k} \sum_{n \in \mathcal{N}}  \hat{r}_{m,k}^{(n)}(e^{\hat{\textbf{p}}^{(n)}})\geq \Psi(\overline{z},T_k,\lambda_k),\\
& \hspace{1 cm} \forall  k \in \mathcal{K}^s, \\
& \hspace{1 cm} \hat{C}14e: e^{\hat{p}_{m,k}^{(n)}}e^{\hat{p}_{m,k'}^{(n)}}\Omega_{m,k,k'}^{(n)}(e^{\hat{\textbf{p}}}) \leq 0, \\
& \hspace{1 cm} \forall m \in \mathcal{M}, n \in \mathcal{N}, k, k' \in \mathcal{K}, \Gamma _{m,k'}^{(n)} \leq \Gamma _{m,k}^{(n)}, k \neq k',
\end{split}
\end{equation}

Moreover, the optimization problem \eqref{eq:trans_Power_Allocation_Problem} is also non convex since the constraint $\hat{C}14e$ becomes a non convex function after the transformation $\textbf{p}=\exp(\hat{\textbf{p}})$. To obtain the convexity of the  constraint  $\hat{C}14e$, we apply the difference of two convex function method \cite{mokari2015limited}. Therefore, at iteration $t$, the constraint $\hat{C}14e$ is replaced by 

\begin{equation}
\begin{split}
\label{C13e_new}
 \hat{C}14e':& e^{\hat{p}_{m,k}^{(n)}}e^{\hat{p}_{m,k'}^{(n)}}\Omega_{m,k,k'}^{(n)'}(e^{\hat{\textbf{p}}})= e^{\hat{p}_{m,k}^{(n)}}e^{\hat{p}_{m,k'}^{(n)}}( \Gamma_{m,k'}^{(n)}\sigma_{m,k}^{(n)}\\
 & -\Gamma_{m,k}^{(n)}\sigma_{m,k'}^{(n)}+\Gamma_{m,k'}^{(n)}\sum_{j \in \mathcal{M}/{\{m\}}}\sum_{i \in \mathcal{K}}p_{j,i}^{(n)} \Gamma_{j,k}^{(n)})-\\
& g(\textbf{p}^{t-1})-\bigtriangledown g^T(\textbf{p}^{t-1})(\textbf{p}^t-\textbf{p}^{t-1})\leq 0,
 \end{split}
\end{equation}
where
$g(\textbf{p})=
\Gamma_{m,k}^{(n)}e^{\hat{p}_{m,k}^{(n)}}e^{\hat{p}_{m,k'}^{(n)}}\sum_{j \in \mathcal{M}/{\{m\}}}\sum_{i \in \mathcal{K}}e^{\hat{p}_{j,i}^{(n)}} \Gamma_{j,k'}^{(n)}$ and its gradient $\bigtriangledown g^T(\textbf{p}^{t-1})$ is also
its super-gradient.
Therefore, the optimization problem \eqref{eq:final_Problem_formulation} is transformed to

\begin{equation}
\label{eq:final_Problem_formulation_1} 
\begin{split} 
&\max_{\hat{\textbf{p}}} \hspace{0.25 cm} 
\hat{O}5e: \sum_{m \in \mathcal{M}}\sum_{k \in \mathcal{K}^e} w_{m,k}\sum_{n \in \mathcal{N}}\hat{r}_{m,k}^{(n)}(e^{\hat{\textbf{p}}^{(n)}})-E^{i}P(\boldsymbol{\rho},e^{\hat{\textbf{p}}})\\ 
& s.t. \hspace{0.5 cm}   \hat{C}4e, \hat{C}10e, \hat{C}11e,  \hat{C}12e,  \hat{C}13e, \hat{C}14e'.
\end{split}
\end{equation}

The optimization problem \eqref{eq:final_Problem_formulation_1} is a convex approximation problem with respect to the variable $\hat{\textbf{p}}$ \cite{ngo2014joint}, \cite{boyd2004convex}.
To solve the considered convex approximation problem \eqref{eq:final_Problem_formulation} using its dual function and related Karush-Kuhn-Tucker (KKT) conditions, we suppose $\boldsymbol{\xi'}$,  $\boldsymbol{\zeta'}$, $\boldsymbol{\vartheta}$,  $\boldsymbol{\vartheta'}$ and $\boldsymbol{\tilde{\zeta}}'$ are the Lagrangian multipliers of the approximation problem \eqref{eq:final_Problem_formulation}. Thus, after applying the Lagrangian method, the transmit power of each elastic user $k$ over subcarrier $n$ from RRH $m$ is found using \eqref{eq:elastic_power_vlaues_1},
\begin{figure*}
\begin{eqnarray}
p_{m,k}^{(n)}=\bigg[\frac{w_{m,k}\hat{\alpha}_{m,k}^{(n)}\frac{1}{\ln(2)}+\overline{\tilde{\psi}}_{m,k}^{'(n)}}{E^i\eta_m+\xi'_m+\hat{\psi}_{m,k}^{'(n)}+\overline{\psi}_{m,k}^{'(n)}+\tilde{\psi}_{m,k}^{(n)}+\tilde{\psi}_{m,k}^{,(n)}+{\hat{\tilde{\psi}}}_{m,k}^{'(n)}}\bigg]_0^{p_{m,k}^{(n), \text{mask}}},
\label{eq:elastic_power_vlaues_1}
\end{eqnarray}
\hrule
\end{figure*}
where
$
 \hat{\psi}_{m,k}^{'(n)}=\sum_{i \in \mathcal{K}^e, \Gamma_{m,k}^{(n)} > \Gamma_{m,i}^{(n)} }w_{m,l}\hat{\alpha}_{m,l}^{(n)}\frac{\gamma_{m,l}^{''(n)}}{p_{m,l}^{(n)} \ln(2)}, \\
\overline{\psi}_{m,k}^{'(n)}=\sum_{m' \in \mathcal{M}/{\{m\}}}\sum_{l \in \mathcal{K}^e}w_{m',l}\hat{\alpha}_{m',l}^{(n)}\frac{\Gamma_{m,l}^{(n)}\gamma_{m',l}^{''(n)}}{p_{m',l}^{(n)}\Gamma_{m',l}^{(n)} \ln(2)}, \\
\tilde{\psi}_{m,k}^{(n)}=\sum_{m' \in \mathcal{M}/{\{m\}}}\sum_{n' \in \mathcal{N}}2\vartheta_{mm'knn'}p_{m',k}^{(n')},\\
\tilde{\psi}_{m,k}^{,(n)}=\sum_{i \in \mathcal{K}/{\{k,j,x\}}}\sum_{j \in \mathcal{K}/{\{k,i,x\}}}\sum_{x \in \mathcal{K}/{\{k,i,j\}}}4\vartheta'_{mnkijx}\\
 p_{m,i}^{(n)}p_{m,j}^{(n)}p_{m,x}^{(n)},\\
{\hat{\tilde{\psi}}}_{m,k}^{'(n)}=-\sum_{k' \in \mathcal{K}^e, \Gamma _{m,k'}^{(n)} \leq \Gamma _{m,k}^{(n)}, k \neq k'}\tilde{\zeta}'_{mnkk'}\Gamma _{m,k}^{(n)} \sum_{j \in \mathcal{M}/{\{m\}}}\\
\sum_{i \in \mathcal{K}}( {{p}_{m,k'}^{(n)}{p}_{j,i}^{(n)}}\Gamma _{j,k'}^{(n)} )-\sum_{k'' \in \mathcal{K}^e, \Gamma _{m,k}^{(n)} \leq \Gamma _{m,k''}^{(n)}, k'' \neq k}\tilde{\zeta}'_{mnk''k}\\
\Gamma _{m,k''}^{(n)} \sum_{j \in \mathcal{M}/{\{m\}}} \sum_{i \in \mathcal{K}}( {{p}_{m,k''}^{(n)}{p}_{j,i}^{(n)}}\Gamma _{j,k}^{(n)} )\\ -\sum_{m' \in \mathcal{M}/{\{m\}}}\sum_{k'' \in \mathcal{K}^e}  \\
 \sum_{k' \in \mathcal{K}^e,\Gamma _{m',k'}^{(n)} \leq \Gamma _{m',k''}^{(n)}, k'' \neq k'} \tilde{\zeta}'_{m'nk''k'} \Gamma _{m',k''}^{(n)} \Gamma _{m,k'}^{(n)}{p}_{m',k'}^{(n)}\\
 {p}_{m',k''}^{(n)}, \overline{\tilde{\psi}}_{m,k}^{'(n)}=-\sum_{k' \in \mathcal{K}^e, \Gamma _{m,k'}^{(n)} \leq \Gamma _{m,k}^{(n)}, k \neq k'}\tilde{\zeta}'_{mnkk'}\Gamma _{m,k'}^{(n)}\\ 
  \sum_{j \in \mathcal{M}/{\{m\}}}\sum_{i \in \mathcal{K}}( ({{p}_{m,k'}^{(n)}{p}_{m,k}^{(n)}{p}_{j,i}^{(n)}})^{t-1}\Gamma _{j,k}^{(n)} )
- \\
\sum_{k'' \in \mathcal{K}^e, \Gamma _{m,k}^{(n)} \leq \Gamma _{m,k''}^{(n)}, k'' \neq k}\tilde{\zeta}'_{mnk''k}\Gamma _{m,k}^{(n)}\\
 \sum_{j \in \mathcal{M}/{\{m\}}}\sum_{i \in \mathcal{K}}( ({{p}_{m,k''}^{(n)}{p}_{m,k}^{(n)}{p}_{j,i}^{(n)}})^{t-1}\Gamma _{j,k''}^{(n)} )
-\\
\sum_{m' \in \mathcal{M}/{\{m\}}}\sum_{k'' \in \mathcal{K}^e}  \sum_{k' \in \mathcal{K}^e,\Gamma _{m',k'}^{(n)} \leq \Gamma _{m',k''}^{(n)}, k'' \neq k'} \tilde{\zeta}'_{m'nk''k'}\\
 \Gamma _{m',k'}^{(n)} \Gamma _{m,k''}^{(n)}({p}_{m',k'}^{(n)}{p}_{m',k''}^{(n)}{p}_{m,k}^{(n)})^{t-1}. $

 As well,
the transmit power  for each streaming user is found by \eqref{eq:streaming_power_vlaues_2},
\begin{figure*}
\begin{eqnarray}
\label{eq:streaming_power_vlaues_2}
p_{m,k}^{(n)}=\bigg[\frac{\zeta'_{k}w_{m,k}\hat{\alpha}_{m,k}^{(n)}\frac{1}{\ln(2)}+\underline{\overline{\tilde{\psi}}}_{m,k}^{'(n)}}{\xi'_m+\underline{\hat{\psi}}_{m,k}^{'(n)}+\underline{\overline{\psi}}_{m,k}^{'(n)}+\tilde{\psi}_{m,k}^{(n)}+\tilde{\psi}_{m,k}^{,(n)}+\underline{\hat{\tilde{\psi}}}_{m,k}^{'(n)}}\bigg]_0^{p_{m,k}^{(n), \text{mask}}},
\end{eqnarray}
\hrule
\end{figure*}
where
\\ 
$
\underline{\hat{\psi}}_{m,k}^{'(n)}=\sum_{i \in \mathcal{K}^s, \Gamma_{m,k}^{(n)} > \Gamma_{m,i}^{(n)} }w_{m,l}\zeta'_{l}\hat{\alpha}_{m,l}^{(n)}\frac{\gamma_{m,l}^{''(n)}}{p_{m,l}^{(n)} \ln(2)},
\\
\underline{\overline{\psi}}_{m,k}^{'(n)}=\sum_{m' \in \mathcal{M}/{\{m\}}}\sum_{l \in \mathcal{K}^s}w_{m',l}\zeta'_{l}\hat{\alpha}_{m',l}^{(n)}\frac{\Gamma_{m,l}^{(n)}\gamma_{m',l}^{''(n)}}{p_{m',l}^{(n)}\Gamma_{m',l}^{(n)} \ln(2)},
\\
\underline{\hat{\tilde{\psi}}}_{m,k}^{'(n)}=-\sum_{k' \in \mathcal{K}^s, \Gamma _{m,k'}^{(n)} \leq \Gamma _{m,k}^{(n)}, k \neq k'}\tilde{\zeta}'_{mnkk'}\Gamma _{m,k}^{(n)} \sum_{j \in \mathcal{M}/{\{m\}}}\\
\sum_{i \in \mathcal{K}}( {{p}_{m,k'}^{(n)}{p}_{j,i}^{(n)}}\Gamma _{j,k'}^{(n)} )
-\sum_{k'' \in \mathcal{K}^s, \Gamma _{m,k}^{(n)} \leq \Gamma _{m,k''}^{(n)}, k'' \neq k}\tilde{\zeta}'_{mnk''k}\\
\Gamma _{m,k''}^{(n)} \sum_{j \in \mathcal{M}/{\{m\}}}\sum_{i \in \mathcal{K}}( {{p}_{m,k''}^{(n)}{p}_{j,i}^{(n)}}\Gamma _{j,k}^{(n)} )\\
-\sum_{m' \in \mathcal{M}/{\{m\}}}\sum_{k'' \in \mathcal{K}^s}
 \sum_{k' \in \mathcal{K}^s,\Gamma _{m,k'}^{(n)} \leq \Gamma _{m,k''}^{(n)}, k'' \neq k'} \tilde{\zeta}'_{m'nk''k'}\\
  \Gamma _{m',k''}^{(n)} \Gamma _{m,k'}^{(n)}{p}_{m',k'}^{(n)}{p}_{m',k''}^{(n)}, 
 \\ 
\underline{\overline{\tilde{\psi}}}_{m,k}^{'(n)}=-\sum_{k' \in \mathcal{K}^s, \Gamma _{m,k'}^{(n)} \leq \Gamma _{m,k}^{(n)}, k \neq k'}\tilde{\zeta}'_{mnkk'}\Gamma _{m,k'}^{(n)} \sum_{j \in \mathcal{M}/{\{m\}}}\\ 
\sum_{i \in \mathcal{K}}( ({{p}_{m,k'}^{(n)}{p}_{m,k}^{(n)}{p}_{j,i}^{(n)}})^{t-1}\Gamma _{j,k}^{(n)} )\\
-\sum_{k'' \in \mathcal{K}^s, \Gamma _{m,k}^{(n)} \leq \Gamma _{m,k''}^{(n)}, k'' \neq k}\tilde{\zeta}'_{mnk''k}\\ 
\Gamma _{m,k}^{(n)} \sum_{j \in \mathcal{M}/{\{m\}}}\sum_{i \in \mathcal{K}}( ({{p}_{m,k''}^{(n)}{p}_{m,k}^{(n)}{p}_{j,i}^{(n)}})^{t-1}\Gamma _{j,k''}^{(n)} )\\
-\sum_{m' \in \mathcal{M}/{\{m\}}}
\sum_{k'' \in \mathcal{K}^s}  \sum_{k' \in \mathcal{K}^s,\Gamma _{m,k'}^{(n)} \leq \Gamma _{m,k''}^{(n)}, k'' \neq k'} \tilde{\zeta}'_{m'nk''k'}\\ \Gamma _{m',k'}^{(n)} \Gamma _{m,k''}^{(n)}
({p}_{m',k'}^{(n)}{p}_{m',k''}^{(n)}{p}_{m,k}^{(n)})^{t-1}, 
$
where the Lagrangian multipliers are updated by applying the sub-gradient method.
Algorithm \ref{table_algo_2:xdef} portrays the transmit power allocation algorithm procedures for each iteration in the iterative algorithm where the output is $\textbf{p}^{t+1}$, for the input $\textbf{p}^{t}$ of iteration $t$. The process of Algorithm \ref{table_algo_2:xdef} ends when a predefined threshold $S$ is accessed or if $\vert\vert \textbf{p}^{t,s}-\textbf{p}^{t,s-1}\vert\vert<\varpi_2$.

\begin{algorithm}
\renewcommand{\arraystretch}{0.8}
\caption{Transmit Power Allocation Algorithm}
\label{table_algo_2:xdef}
\centering
\begin{tabular}{@{\makebox[3em][r]{\rownumber\space}} | l}
 INITIALIZE $s=0$, $\textbf{p}^{t,s}=\textbf{p}^t$, $\hat{\alpha}_{m,k}^{(n),s}=1$ and\\ $\hat{\beta}_{m,k}^{(n),s}=0,$  
$\forall m \in \mathcal{M}, k \in \mathcal{K}, n \in \mathcal{N}$ \\ (a simple high-SIR approximation)\\
 REPEAT   \\ 
 \hspace{20pt}Initialize $v=0$, $\textbf{p}^{t,s,v}=\textbf{p}^{t,s}$ and calculate \\
  \hspace{20pt} $\boldsymbol{\xi'}^v$, 
 $\boldsymbol{\zeta'}^v$, $\boldsymbol{\vartheta}^v$,  $\boldsymbol{\vartheta'}^v$ and $\boldsymbol{\tilde{\zeta}}^{'v}$;\\
 \hspace{20pt}Repeat   \\ 
 \hspace{35pt}$\bullet$ Update $\textbf{p}^{t,s,v}$ using \eqref{eq:elastic_power_vlaues_1} and \eqref{eq:streaming_power_vlaues_2} \\ 
\hspace{35pt}$\bullet$ Update $\boldsymbol{\xi'}^v$,  $\boldsymbol{\zeta'}^v$, $\boldsymbol{\vartheta}^v$,  $\boldsymbol{\vartheta'}^v$ and $\boldsymbol{\tilde{\zeta}}^{'v}$ \\
 \hspace{40pt} by applying the sub-gradient method,\\
 \hspace{35pt}$\bullet$ $v=v+1$\\
 \hspace{20pt}Until $\vert\vert \textbf{p}^{t,s,v}-\textbf{p}^{t,s,v-1}\vert\vert<\varpi_1$\\
 \hspace{20pt}$\textbf{p}^{t,s}=\textbf{p}^{t,s,v}$\\
 \hspace{15pt}Update $\hat{\alpha}_{m,k}^{(n),s+1}$ and $\hat{\beta}_{m,k}^{(n),s+1}$ $\forall m \in \mathcal{M}, k \in \mathcal{K}$,\\
  \hspace{15pt} $ n \in \mathcal{N}$ at $(\textbf{p}^{t,s})$\\
\hspace{19pt}$s=s+1$\\
 UNTIL $\vert\vert \textbf{p}^{t,s}-\textbf{p}^{t,s-1}\vert\vert<\varpi_2$ or $s=S$ \\
 OUTPUT $\textbf{p}^{t+1}=\textbf{p}^{t,s}$\\
\end{tabular} 
\end{algorithm}

\begin{theorem}
\label{proposition3}
The Successive Convex Approximation (SCA) with the SCALE approach, creates a sequence of enhanced solutions that converges to a local optimum.
\end{theorem}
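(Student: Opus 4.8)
The plan is to establish convergence by the standard minorization--maximization argument, resting on the two tangency properties of the SCALE bound in \eqref{eq:scale_approximation}. First I would verify that at the expansion point $z_0 = \gamma_{m,k}^{''(n),t-1}$ the surrogate is tight in both value and first derivative: substituting $z=z_0$ gives $\hat\alpha\log_2 z_0 + \hat\beta = \log_2(1+z_0)$, and differentiating both sides at $z_0$ gives $\hat\alpha/(z_0\ln 2) = 1/((1+z_0)\ln 2)$, so value and gradient agree. The same tangency must be checked for the difference-of-convex linearization that replaces $\hat{C}14e$ by $\hat{C}14e'$ in \eqref{C13e_new}: since the subtracted term $g$ is convex, replacing it by its first-order expansion yields $-g(\textbf{p}) \le -[g(\textbf{p}^{t-1})+\bigtriangledown g^{T}(\textbf{p}^{t-1})(\textbf{p}-\textbf{p}^{t-1})]$, so $\hat{C}14e'$ is a convex \emph{restriction} of the original feasible set that coincides with $\hat{C}14e$ in value and gradient at $\textbf{p}^{t-1}$.

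Second, I would use these properties to prove monotone improvement. Because each $\hat r_{m,k}^{(n),t}$ minorizes $r_{m,k}^{(n)}$ everywhere and the two agree at the expansion point, and because $\hat{C}14e'$ only shrinks the feasible region while keeping the current iterate feasible, the previous iterate is feasible for the convex subproblem \eqref{eq:final_Problem_formulation_1}. Writing $\hat O$ for the surrogate objective of \eqref{eq:final_Problem_formulation_1} and $O$ for the true objective $R(\textbf{p})-E^{i}P(\textbf{p})$ of \eqref{eq4:Transformed_Energy_Efficiency_Optimization_Problem_P2_2}, minorization gives $O(\textbf{p})\ge\hat O(\textbf{p})$ pointwise, optimality of the subproblem gives $\hat O(\textbf{p}^{t,s})\ge\hat O(\textbf{p}^{t,s-1})$, and value tangency gives $\hat O(\textbf{p}^{t,s-1})=O(\textbf{p}^{t,s-1})$. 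Chaining these, $O(\textbf{p}^{t,s})\ge O(\textbf{p}^{t,s-1})$, so the sequence of true objective values is non-decreasing across the SCALE updates of Algorithm \ref{table_algo_2:xdef}.

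Third, I would argue convergence of this monotone sequence. The power mask $\hat{C}4e$ and the per-RRH budget $\hat{C}12e$ confine $\textbf{p}$ to a compact set, on which the achievable rate, and hence $O$, is bounded above; a bounded monotone sequence converges. This establishes the ``sequence of enhanced solutions'' claim and the existence of a limiting objective value.

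The hard part will be the final step: showing the limit point is a genuine stationary (KKT) point of the \emph{original} non-convex problem \eqref{eq4:Transformed_Energy_Efficiency_Optimization_Problem_P2_2}, not merely a fixed point of the surrogate recursion. Here I would invoke gradient tangency a second time. At a fixed point $\textbf{p}^{t,s}=\textbf{p}^{t,s-1}=\bar{\textbf{p}}$ the surrogate is expanded about $\bar{\textbf{p}}$ itself, so the Lagrangian stationarity conditions of the subproblem --- written out as the water-filling expressions \eqref{eq:elastic_power_vlaues_1} and \eqref{eq:streaming_power_vlaues_2} --- reduce termwise to the KKT conditions of \eqref{eq4:Transformed_Energy_Efficiency_Optimization_Problem_P2_2}, because $\hat r_{m,k}^{(n),t}$ and its gradient equal $r_{m,k}^{(n)}$ and its gradient at $\bar{\textbf{p}}$, and likewise for the DC linearization of $\hat{C}14e$. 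The delicate points are to confirm that the exponential change of variables $\textbf{p}=\exp(\hat{\textbf{p}})$ and the small-positive relaxations $\varrho_1,\varrho_2$ in $\hat{C}10e,\hat{C}11e$ do not perturb the stationarity system, and that complementary slackness is carried across the surrogate. Once each inner fixed point is shown to satisfy the original KKT conditions, it is a local optimum; combined with Theorem \ref{Theorem1} and Proposition \ref{proposition1}, this closes the convergence argument for the full nested scheme.
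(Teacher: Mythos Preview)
Your argument is correct and mirrors the paper's proof: both rely on the SCALE tangency and minorization to show the previous iterate is feasible for the new subproblem, chain $O(\textbf{p}^{t,s})\ge\hat O(\textbf{p}^{t,s})\ge\hat O(\textbf{p}^{t,s-1})=O(\textbf{p}^{t,s-1})$ for monotone improvement (the paper's \eqref{eq:pro3_eq2}), invoke compactness for convergence, and appeal to KKT stationarity at the fixed point. The only difference is one of emphasis: the paper devotes a separate chain \eqref{eq:pro3_eq1} to feasibility of the streaming-rate constraint $\hat{C}13e$ (using an active-constraint lemma from \cite{papandriopoulos2009scale}), whereas you give more explicit detail on the DC restriction $\hat{C}14e'$ and the termwise KKT verification, which the paper dispatches by citing \cite{venturino2009coordinated,papandriopoulos2009scale}.
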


\begin{proof}
Let $R_k^{\text{target}}= \Psi(\overline{z},T_k,\lambda_k)$ and $\hat{r}_{k}(\textbf{p})=\sum_{m \in \mathcal{M}}w_{m,k}\sum_{n \in \mathcal{N}}  \hat{r}_{m,k}^{(n)}(\textbf{p}^{(n)})$. After the first iteration, $t=1$, because of the high-SIR assumption, we have a feasible solution $\textbf{p}^1$  \cite{ngo2014joint} and Theorem 1 in \cite{papandriopoulos2009scale}. Meanwhile, for every streaming user $k \in \mathcal{K}^s$ and for each iteration $t>1$, we have

\begin{equation}
\begin{split}
\label{eq:pro3_eq1}
R_k^{\text{target}}&\overset{(\rom{1})}{=}\hat{r}_{k}(\textbf{p}^{t-1};\hat{\boldsymbol{\alpha}}^{t-1},\hat{\boldsymbol{\beta}}^{t-1}) \\&
\overset{(\rom{2})}{\leq}\hat{r}_{k}(\textbf{p}^{t-1})
\overset{(\rom{3})}{\leq}\hat{r}_{k}(\textbf{p}^{t-1};\hat{\boldsymbol{\alpha}}^{t},\hat{\boldsymbol{\beta}}^{t}).
\end{split}
\end{equation}

In \eqref{eq:pro3_eq1}, the equality $(\rom{1})$ follows from that all the target rate constraints $\hat{C}18e$ are active at the optimal solution of the optimization problem \eqref{eq:final_Problem_formulation}, Lemma 2 in \cite{papandriopoulos2009scale}. The inequality $(\rom{2})$ follows from the bound in \eqref{eq:scale_approximation} and the equality $(\rom{3})$ follows from the update step of $\boldsymbol{\hat{\alpha}}$ and $\boldsymbol{\hat{\beta}}$ in the transmit power allocation algorithm, \cite{ngo2014joint} and Theorem 1 in \cite{papandriopoulos2009scale}. Therefore, it is proved that the solution after each iteration $t-1$, is a feasible solution at iteration $t$.

Additionally, let $\hat{R}(\textbf{p};\boldsymbol{\hat{\alpha}},\boldsymbol{\hat{\beta}})-E^i\hat{P}(\textbf{p})= \sum_{m \in \mathcal{M}}\sum_{k \in \mathcal{K}^e} w_{m,k}\sum_{n \in \mathcal{N}}\hat{r}_{m,k}^{(n)}(\textbf{p}^{(n)})- E^{i}(P_f^H+M_fP_f^L + $ $\eta_m\sum_{m \in \mathcal{M}}\sum_{k \in \mathcal{K}^e_m}\sum_{n \in \mathcal{N}} p_{m,k}^{(n)}+P_c^H+MP^L_c)$. Therefore, we have

\begin{equation}
\label{eq:pro3_eq2}
\begin{split}
&\hat{R}(\textbf{p}^t;\boldsymbol{\hat{\alpha}}^t,\boldsymbol{\hat{\beta}}^t)-E^i\hat{P}(\textbf{p}^t)
=\max_\textbf{p} \hat{R}(\textbf{p};\boldsymbol{\hat{\alpha}}^t,\boldsymbol{\hat{\beta}}^t)-E^i\hat{P}(\textbf{p})\\
& \geq \hat{R}(\textbf{p}^{t-1};\boldsymbol{\hat{\alpha}}^t,\boldsymbol{\hat{\beta}}^t)-E^i\hat{P}(\textbf{p}^{t-1})
= \hat{R}(\textbf{p}^{t-1})-E^i\hat{P}(\textbf{p}^{t-1})\\
& \geq \hat{R}(\textbf{p}^{t-1};\boldsymbol{\hat{\alpha}}^{t-1},\boldsymbol{\hat{\beta}}^{t-1})-E^i\hat{P}(\textbf{p}^{t-1}).
\end{split}
\end{equation}
Thus, it is demonstrated that the objective function value, after each iteration $t$, either increases or stays unaltered as that at iteration $t-1$. Therefore, the SCA converges to the last feasible solution acquired due to the compact of the feasible region of the optimization problem. Moreover, according to \cite{venturino2009coordinated} and \cite{papandriopoulos2009scale}, the last feasible solution satisfies the necessary KKT conditions of the optimization problem \eqref{eq4:Transformed_Energy_Efficiency_Optimization_Problem_P2_2}.
\end{proof}

 \subsection{ Optimal Solution}
     
     In order to find the global optimal solution of our system model, we utilize a global optimization framework named monotonic optimization method. Monotonic optimization method
 takes advantage of the monotonicity or hidden
monotonicity in the constraints and the objective function to
reduce the computational complexity and provide a guaranteed convergence \cite{zappone2016framework,zappone2017globally,moltafet2018optimal}. 
 
  \begin{defi}
(Monotonicity). For  $\textbf{y}_1 \succeq \textbf{y}_2$, if $f(\textbf{y}_1) \geq f(\textbf{y}_2)$, then, any function $f$ is monotonically increasing.
 \end{defi}

   \begin{defi}
 (Hyper-rectangle). If  $\textbf{b}_1 \preceq \textbf{b}_2$ and $\textbf{b}_1 \preceq \textbf{y}_1 \preceq \textbf{b}_2$, then, the set of all $\textbf{y}_1$  is a hyper-rectangle in $[\textbf{b}_1,\textbf{b}_2]$.  
\end{defi}
   \begin{defi}
(Normal set). A set $\Upsilon_1 $ is a normal set if\, $\forall \textbf{y}_1 \in \Upsilon_1$, then the hyper-rectangle $[\boldsymbol{0},\textbf{y}_1] \in \Upsilon_1$. 
\end{defi}
  \begin{defi}
(Co-normal set). A set $\Upsilon_2$ is a co-normal set in $[\boldsymbol{0},\textbf{b}_2]$
if\, $\forall \textbf{y}_1 \in \Upsilon_2$, then $[\textbf{y}_1,\textbf{b}_2] \subset \Upsilon_2$. 
\end{defi}
   \begin{defi}
  \label{def_mon}
  (Monotonic optimization). A monotonic optimization problem in canonical form is defined as
  \begin{equation}
  \max_{\textbf{y}_1} f(\textbf{y}_1) \,\, s.t. \,\, \textbf{y}_1 \in \Upsilon_1\cap \Upsilon_2 ,
  \end{equation}
  where $\Upsilon_1 \subset [\textbf{0}, \textbf{b}_2]$ is a normal set with non-empty interior,  $\Upsilon_2$ is a closed co-normal set in $[\textbf{0}, \textbf{b}_2]$ and $f$ is an increasing function. 
\end{defi}

The considered optimization problem is

\begin{equation}
\label{eq4:Transformed_Energy_Efficiency_Optimization_Problem_P2_2} 
 \max_{\textbf{p}} \hspace{0.25 cm} 
O5: R({\textbf{p}})-E^iP({\textbf{p}})
\hspace{0.5 cm} s.t.  \hspace{0.5 cm} C4, C10 - C14,
\end{equation}

Problem \eqref{eq4:Transformed_Energy_Efficiency_Optimization_Problem_P2_2} is a non-monotonic problem due to the objective function and the constraints $C13$ and $C14$. Therefore, in order to globally solve the optimization problem \eqref{eq4:Transformed_Energy_Efficiency_Optimization_Problem_P2_2}, we first write the considered optimization problem as a monotonic optimization problem in canonical form, then, we apply the polyblock algorithm \cite{zappone2016framework,zappone2017globally,moltafet2018optimal}.
Thus, let
${r}_{m,k}^{(n)}(\textbf{p})=q_{m,k}^{(n)+}(\textbf{p})-q_{m,k}^{(n)-}(\textbf{p})$ 
and
$p_{m,k}^{(n)}p_{m,k'}^{(n)}\Omega_{m,k,k'}^{(n)}(\textbf{p}) = \hat{q}_{m,k,k'}^{(n)+}(\textbf{p})-\hat{q}_{m,k,k'}^{(n)-}(\textbf{p})$,
where $q_{m,k}^{(n)+}(\textbf{p})=\log_2(\sigma_{m,k}^{(n)}+ \overline{I}_{m,k}^{(n)}+p_{m,k}^{(n)} \Gamma_{m,k}^{(n)})$, $q_{m,k}^{(n)-}(\textbf{p})=\log_2(\sigma_{m,k}^{(n)}+ \overline{I}_{m,k}^{(n)})$, $\hat{q}_{m,k,k'}^{(n)+}(\textbf{p})=p_{m,k}^{(n)}p_{m,k'}^{(n)}(\Gamma_{m,k'}^{(n)}\sigma_{m,k}^{(n)}-\Gamma_{m,k}^{(n)}\sigma_{m,k'}^{(n)}+\Gamma_{m,k'}^{(n)}\sum_{j \in \mathcal{M}/{\{m\}}}\sum_{i \in \mathcal{K}}p_{j,i}^{(n)} \Gamma_{j,k}^{(n)})$ and $\hat{q}_{m,k,k'}^{(n)-}(\textbf{p})=p_{m,k}^{(n)}p_{m,k'}^{(n)}(\Gamma_{m,k}^{(n)}\sum_{j \in \mathcal{M}/{\{m\}}}\sum_{i \in \mathcal{K}}p_{j,i}^{(n)} \Gamma_{j,k'}^{(n)})$. Therefore, ${R}({\textbf{p}})=\sum_{m \in \mathcal{M}}\sum_{k \in \mathcal{K}^e} w_{m,k}\sum_{n \in \mathcal{N}}(q_{m,k}^{(n)+}(\textbf{p})-q_{m,k}^{(n)-}(\textbf{p}))$ and $r_{k}(\textbf{p})=\sum_{m \in \mathcal{M}}w_{m,k}\sum_{n \in \mathcal{N}} (q_{m,k}^{(n)+}(\textbf{p})-q_{m,k}^{(n)-}(\textbf{p}))$.

The objective function $O5: R({\textbf{p}})-E^iP({\textbf{p}})$ can be equivalently rewritten as a difference of two increasing
functions
\begin{equation}
R({\textbf{p}})-E^iP({\textbf{p}})=q^{+}(\textbf{p})-{q}^{-}(\textbf{p},E^i),
\end{equation}
where
$q^{+}(\textbf{p})=\sum_{m \in \mathcal{M}}\sum_{k \in \mathcal{K}^e} w_{m,k}\sum_{n \in \mathcal{N}}q_{m,k}^{(n)+}(\textbf{p})$
and ${q}^{-}(\textbf{p},E^i)=\sum_{m \in \mathcal{M}}\sum_{k \in \mathcal{K}^e} w_{m,k}\sum_{n \in \mathcal{N}}{q}_{m,k}^{(n)-}(\textbf{p})+E^iP({\textbf{p}})$. Moreover, The set of constraints in $C13$ can be equivalently rewritten as the following single constraint:
\begin{equation}
\label{single_cons_1}
\min_{\forall k\in \mathcal{K}^s}[q^+_{k}(\textbf{p})-q^-_{k}(\textbf{p})-\Psi(\overline{z},T_k,\lambda_k)] \geq 0,
\end{equation}
where $q^+_{k}(\textbf{p})=\sum_{m \in \mathcal{M}}w_{m,k}\sum_{n \in \mathcal{N}}q_{m,k}^{(n)+}(\textbf{p})$ and $q^-_{k}(\textbf{p})=\sum_{m \in \mathcal{M}}w_{m,k}\sum_{n \in \mathcal{N}}q_{m,k}^{(n)-}(\textbf{p})$. Then,
$\min_{\forall k\in \mathcal{K}^s}[q^+_{k}(\textbf{p})-q^-_{k}(\textbf{p})-\Psi(\overline{z},T_k,\lambda_k)] =
\min_{\forall k\in \mathcal{K}^s}[q^+_{k}(\textbf{p})-(\sum_{\forall k'\in \mathcal{K}^s}q^-_{k}(\textbf{p})-\sum_{\forall k'\in \mathcal{K}^s, k' \neq k}q^-_{k}(\textbf{p}))-
 \Psi(\overline{z},T_k,\lambda_k)] 
=\min_{\forall k\in \mathcal{K}^s}[q^+_{k}(\textbf{p})+\sum_{\forall k'\in \mathcal{K}^s, k' \neq k}q^-_{k}(\textbf{p})-\Psi(\overline{z},T_k,\lambda_k)] -
\sum_{\forall k'\in \mathcal{K}^s}q^-_{k}(\textbf{p})
\geq 0,
$
where it
is a difference of two increasing functions, $\tilde{q}^+_{k}(\textbf{p})=\min_{\forall k\in \mathcal{K}^s}[q^+_{k}(\textbf{p})+\sum_{\forall k'\in \mathcal{K}^s, k' \neq k}q^-_{k}(\textbf{p})-\Psi(\overline{z},T_k,\lambda_k)]$ and $\tilde{q}^-_{k}(\textbf{p})=\sum_{\forall k'\in \mathcal{K}^s}q^-_{k}(\textbf{p})$.  By introducing  the auxiliary variables $s_1$, $s_2$, and $\textbf{s}_3$, the problem formulation \eqref{eq4:Transformed_Energy_Efficiency_Optimization_Problem_P2_2} is reformulated as \cite{zappone2016framework,zappone2017globally,moltafet2018optimal}:

\begin{equation}
 \begin{split}
\label{eq:opt} 
&\max_{\textbf{p},s_1,s_2,\textbf{s}_3} \hspace{0.25 cm}   O6: {q}^+(\textbf{p})+s_1,\\
&   s.t.  \hspace{0.5 cm} C4, C10 - C12, \\
& \hspace{1 cm} C15: 0 \leq s_1+{q}^{-}(\textbf{p},E^i) \leq {q}^{-}(\textbf{p}^{\text{mask}},E^i) ,
\\
& \hspace{1 cm} C16: 0 \leq s_1 \leq {q}^{-}(\textbf{p}^{\text{mask}},E^i)-{q}^{-}(\textbf{0},E^i),\\
& \hspace{1 cm} C17: 0 \leq s_2 \leq \tilde{q}_k^-(\textbf{p}^{\text{mask}})-\tilde{q}_k^-(\textbf{0}),\\
&
\hspace{1cm} C18: \tilde{q}_k^-(\textbf{p})+s_2 \leq \tilde{q}_k^-(\textbf{p}^{\text{mask}}),\\
&
\hspace{1cm}C19: \tilde{q}_k^+(\textbf{p})+ s_2 \geq \tilde{q}_k^-(\textbf{p}^{\text{mask}}),\\
&
\hspace{1cm} C20: \hat{q}_{m,k,k'}^{(n)+}(\textbf{p})+s_{3,m,k,k'}^{(n)} \leq  \hat{q}_{m,k,k'}^{(n)+}(\textbf{p}^\text{mask}),\\&
\hspace{1cm}  \forall m \in \mathcal{M}, n \in \mathcal{N}, k, k' \in \mathcal{K}, \Gamma _{m,k'}^{(n)} \leq \Gamma _{m,k}^{(n)}, k \neq k',\\
 &
\hspace{1cm} C21: \hat{q}_{m,k,k'}^{(n)-}(\textbf{p})+s_{3,m,k,k'}^{(n)}   \geq \hat{q}_{m,k,k'}^{(n)+}(\textbf{p}^\text{mask}),  \\&
\hspace{1cm} \forall m \in \mathcal{M}, n \in \mathcal{N}, k, k' \in \mathcal{K}, \Gamma _{m,k'}^{(n)} \leq \Gamma _{m,k}^{(n)}, k \neq k',\\
 & 
\hspace{1cm} C22: 0 \leq s_{3,m,k,k'}^{(n)}   \leq \hat{q}_{m,k,k'}^{(n)+}(\textbf{p}^\text{mask})- \hat{q}_{m,k,k'}^{(n)+}(\textbf{0}),  \\&
\hspace{1cm} \forall m \in \mathcal{M}, n \in \mathcal{N}, k, k' \in \mathcal{K}, \Gamma _{m,k'}^{(n)} \leq \Gamma _{m,k}^{(n)}, k \neq k'.
\end{split}
\end{equation}

The feasible set of Problem \eqref{eq:opt} is described by the intersection of the following two sets:

 \begin{align}
 \label{set_11}
 &{\Upsilon_1}= \{ ( s_1, s_2, \textbf{s}_3,  \textbf{P}): \textbf{P} \preceq \textbf{P}^{\text{mask}}, C10, C11, C12, \\&\nonumber
  s_1+{q}^{-}(\textbf{p},E^i) \leq {q}^{-}(\textbf{p}^{\text{mask}},E^i), C18, C20
   \},
 \end{align}
 and
 \begin{align}
 \label{set_2}
 &{\Upsilon_2}=\{  ( s_1, s_2, \textbf{s}_3,  \textbf{P}): \textbf{P} \succeq \textbf{0},  s1 \geq 0, C19, C21 \},
 \end{align}
where $\Upsilon_1$ and $\Upsilon_2$ are the normal and co-normal sets, respectively, in the following hyper-rectangle \cite{zappone2016framework,zappone2017globally,moltafet2018optimal}
 \begin{align}
 \label{hyper_rectangle}
 &[0, {q}^{-}(\textbf{p}^{\text{mask}},E^i)-{q}^{-}(\textbf{0},E^i)] \times  [0, \tilde{q}_k^-(\textbf{p}^{\text{mask}})-\tilde{q}_k^-(\textbf{0})]  \times\\&\nonumber
    [0, \hat{q}_{m,k,k'}^{(n)+}(\textbf{p}^\text{mask})- \hat{q}_{m,k,k'}^{(n)+}(\textbf{0})] \times  [\textbf{0}, \textbf{P}^{\text{mask}}].
 \end{align}
 Problem \eqref{eq:opt} fulfills Definition \ref{def_mon}. Then, Problem \eqref{eq:opt} is a monotonic optimization problem in a canonical form \cite{zappone2016framework,zappone2017globally,moltafet2018optimal}.  After that, problem \eqref{eq:opt} is solved by applying the polyblock algorithm.

\section{Computational Complexity} \label{Computational Complexity}

In this section, the computational complexity of the proposed optimization problem for both the the solution global optimal approach  and suboptimal approach   are studied.
In this work, in order to find the global optimal solution, we applied the monotonic optimization approach by utilizing  the polyblock algorithm. 

 The polyblock algorithm consists of four main steps as: 
 \begin{itemize}
 	\item
 	Obtaining the best vertex which its projection belongs to the normal set 
 	 \item
 Obtaining  the projection of selected vertex
 \item
  Removing the improper vertexes 
 \item
 Obtaining the new vertex set
 \end{itemize}
 
 We consider  that the dimensions of the proposed problem
is $  \overline{T}_1 $, the   projection of each vertex is given by the bisection algorithm with $ \overline{T}_2 $  iterations and  after $ \overline{T}_3 $  iterations
 the polyblock algorithm converges. Then, a simplified complexity order can be given by \cite{moltafet2018optimal}
 $$\mathcal{O}(\overline{T}_3(\overline{T}_3\times \overline{T}_1+\overline{T}_2)).$$

Moreover, to find the suboptimal solution we applied the SCALE method.
To solve the optimization problem \eqref{eq4:Transformed_Energy_Efficiency_Optimization_Problem_P2_2}, one step is applied to determine the power allocation through iterative approach. 
The power allocation values are obtained by solving \eqref{eq:elastic_power_vlaues_1} and \eqref{eq:streaming_power_vlaues_2}. Therefore, in each iteration, the power allocation values are obtained with computational complexity equal to $O(M \times K \times N)$. Moreover, in each iteration, the dual variables are computed with computational complexity equal to $O(M(1+N \times K+N \times K^4+N \times K^2+M \times K \times N^2)+K^s)$ \cite{mokari2015limited}. Thus, for each iteration, the total computational complexity is equal to $O(M \times K \times N)(M(1+N \times K+N \times K^4+N \times K^2+M \times K \times N^2)+K^s)$.


\section{Distributed solution and signalling overhead discussion}\label{Distributed solution and signalling overhead}
 In this section, at first the distributed solution is explained, and then, the signalling overhead for both centralized and distributed solution are investigated.
In order to solve the proposed optimization problem, in a distributed network, at first each RRH  initializes the corresponding parameters (power of the assigned users and Lagrangian multipliers) and broadcasts them to the other RRHs. Then, with the received parameters, each RRH calculates the power of  the assigned users in addition to updating the corresponding Lagrangian multipliers, and broadcasts them to the other RRHs. Calculation of user power, updating the Lagrangian multipliers, and broadcasting the results is continued until the convergence is achieved.  	The main steps of distributed solution are summarized as follows:

\begin{itemize}
	\item Initialize the  power of its assigned user and initialize the corresponding Lagrangian multipliers 
	\item Broadcast the initialized parameters 
	\item	\textbf{Repeat}
	\begin{itemize}
		\item Receive the broadcasted parameters from the other RRHs 
		\item Update the corresponding  Lagrangian multipliers 
		\item Calculate the power of its assigned users
		\item Check the convergence condition
		\item Broadcast the calculated power and Lagrangian multipliers 
	\end{itemize}
	\item end
\end{itemize}

In the following, the   signalling overhead of the centralized and distribution solutions are plotted versus the number of users.  The number of bits
used for the quantization of the different variables are summarized in Table \ref{table-5}. The signalling overhead for the centralized and distributed approaches is shown in Fig. \ref{Fig downlink trans0}.  As can be seen, the signalling overhead of the centralized solution is more than that of the distributed solution.

\begin{table}
\centering
\caption{Quantization of variables}
\label{table-5}
\begin{tabular}{ |l|c|}
	\hline
	
	Feedback variable  & Number of bits  \\
	
	\hline
	
	Each entry of matrices $\boldsymbol{\zeta'}$, $\boldsymbol{\vartheta}$,  $\boldsymbol{\vartheta'}$, $\boldsymbol{\tilde{\zeta}}'$ & $3$
	\\
	
	\hline
	Each entry of matrices $\boldsymbol{\rho}$, $\textbf{p}$, $\textbf{A}$ &    $3$
	\\
	\hline
	$h_{m,k}^{(n)}$ &    $3$
	\\
	\hline
	
	$\sum_{l \in \mathcal{K}^s}w_{m',l}\zeta'_{l}\hat{\alpha}_{m',l}^{(n)}\frac{\Gamma_{m,l}^{(n)}\gamma_{m',l}^{''(n)}}{p_{m',l}^{(n)}\Gamma_{m',l}^{(n)} \ln(2)}$  &
	$3$\\
	\hline
	$\sum_{i \in \mathcal{K}}( {{p}_{m,k''}^{(n)}{p}_{j,i}^{(n)}}\Gamma _{j,k}^{(n)} )$   & $3$\\
	\hline
	$\sum_{k'' \in \mathcal{K}^s} 
	\sum_{k' \in \mathcal{K}^s,\Gamma _{m,k'}^{(n)} \leq \Gamma _{m,k''}^{(n)}, k'' \neq k'} \tilde{\zeta}'_{m'nk''k'} $&\\$\Gamma _{m',k''}^{(n)} \Gamma _{m,k'}^{(n)}{p}_{m',k'}^{(n)}{p}_{m',k''}^{(n)}\rho_{b_i,n_i,k}, x_{b_i,n_i,k}$   & $3$\\
	\hline
	$\sum_{i \in \mathcal{K}}( ({{p}_{m,k'}^{(n)}{p}_{m,k}^{(n)}{p}_{j,i}^{(n)}})^{t-1}\Gamma _{j,k}^{(n)} )$   & $3$\\
	\hline
	$\sum_{i \in \mathcal{K}}( ({{p}_{m,k''}^{(n)}{p}_{m,k}^{(n)}{p}_{j,i}^{(n)}})^{t-1}\Gamma _{j,k''}^{(n)} )$   & $3$\\
	\hline
	$\sum_{k'' \in \mathcal{K}^s}  \sum_{k' \in \mathcal{K}^s,\Gamma _{m,k'}^{(n)} \leq \Gamma _{m,k''}^{(n)}, k'' \neq k'} \tilde{\zeta}'_{m'nk''k'} $&\\$\Gamma _{m',k'}^{(n)} \Gamma _{m,k''}^{(n)}
	({p}_{m',k'}^{(n)}{p}_{m',k''}^{(n)}{p}_{m,k}^{(n)})^{t-1}$   & $3$\\
	\hline
	$\sum_{l \in \mathcal{K}^e}w_{m',l}\hat{\alpha}_{m',l}^{(n)}\frac{\Gamma_{m,l}^{(n)}\gamma_{m',l}^{''(n)}}{p_{m',l}^{(n)}\Gamma_{m',l}^{(n)} \ln(2)}$   & $3$\\
	\hline
	$\sum_{n' \in \mathcal{N}}2\vartheta_{mm'knn'}p_{m',k}^{(n')}$   & $3$\\
	\hline
	$\sum_{i \in \mathcal{K}}( {{p}_{m,k'}^{(n)}{p}_{j,i}^{(n)}}\Gamma _{j,k'}^{(n)} )$   & $3$\\
	\hline
	$\sum_{i \in \mathcal{K}}( {{p}_{m,k''}^{(n)}{p}_{j,i}^{(n)}}\Gamma _{j,k}^{(n)} )$   & $3$\\
	\hline
	$\sum_{k'' \in \mathcal{K}^e}
	\sum_{k' \in \mathcal{K}^e,\Gamma _{m',k'}^{(n)} \leq \Gamma _{m',k''}^{(n)}, k'' \neq k'} $&\\$\tilde{\zeta}'_{m'nk''k'} \Gamma _{m',k''}^{(n)} \Gamma _{m,k'}^{(n)}{p}_{m',k'}^{(n)}
	{p}_{m',k''}^{(n)}$   & $3$\\
	\hline
\end{tabular}\label{Table signalling}
\end{table}
\begin{figure}
\centering
\includegraphics[width=.4\textwidth]{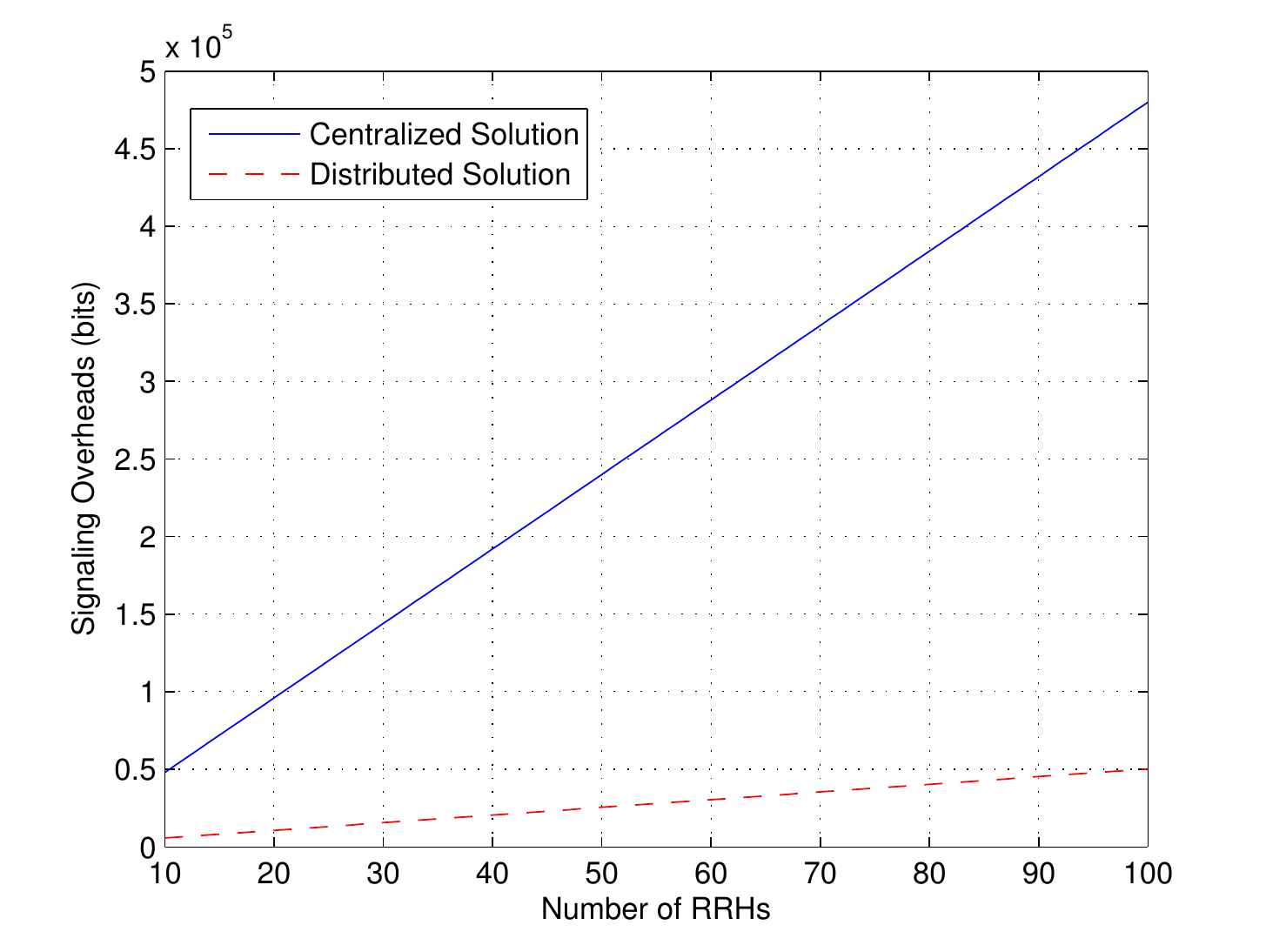}
\caption{Signalling overhead for the centralized and distributed approaches.}
\label{Fig downlink trans0}
\end{figure}

\section{A Framework for Accelerating the General SCALE with Lagrangian Method using GPU} \label{framework}

In next generation of cellular systems, high computational processing  is required which calls for sophisticated method. Thus, in order to tackle this issue,  we design a parallel framework for accelerating the general SCALE with the Lagrangian method on GPU using OpenACC API \cite{openacc2011openacc}. The OpenACC API creates high-level heterogeneous programs employing a set  of compiler directives to appoint the code's  parallel regions in standard C, C++, and Fortran in order to be offloaded from a host central processing unit (CPU) to an attached GPU accelerator \cite{openacc2011openacc}. OpenACC directives, facilitate the process of converting an existing serial code into a parallel one in a productive way without substantially exchanging the code.
 The important task in this work, is to determine the  parallel regions of the code. 

 Algorithm \ref{table_algo_4:xdef} describes all the steps of the SCALE with the Lagrangian method where $\hat{\boldsymbol{\alpha}}$ and $\hat{\boldsymbol{\beta}}$ are the values obtained when applying the lower bound of  \eqref{eq:scale_approximation} and $\textbf{y}$ is the vector of the optimization variables. In each iteration, the optimization variables, Lagrangian multipliers, $\hat{\boldsymbol{\alpha}}$ and $\hat{\boldsymbol{\beta}}$ can be updated independently. Therefore, the parallel regions in the algorithm that have the most calculations are 1) updating the optimization variables, 2) updating the Lagrangian multipliers and 3) updating $\hat{\boldsymbol{\alpha}}$ and $\hat{\boldsymbol{\beta}}$ which can be accelerated using OpenACC API. 
A few lines to the Fortran code  (the highlighted lines in Algorithm (\ref{table_algo_4:xdef}) have to be added in order to  offload the code from the host CPU to the GPU accelerator. These added lines indicate the OpenACC data clause and the kernels loop. The OpenACC data clause imports the data needed for the GPU and as well returns the code output to the host CPU. The kernels loop directive identifies the loops that can be parallelized for the compiler to be executed in parallel on the GPU.
 
\begin{algorithm}
\renewcommand{\arraystretch}{0.8}
\caption{SCALE with the Lagrangian Algorithm Using OpenACC Programming Model}
\label{table_algo_4:xdef}
\centering
\begin{tabular}{@{\makebox[3em][r]{\rownumber\space}} | l}
 INITIALIZE $s=0$, $\textbf{y}^{s}=\textbf{y}^{initial}$, $\hat{\boldsymbol{\alpha}}$ and $\hat{\boldsymbol{\beta}}$ \\
 \rowcolor{lightgray} !\$ acc data $\text{copyin(input-list)}$ $\text{copyout(output-list)}$\\
 REPEAT   \\ 
 \hspace{20pt}Initialize $v=0$, $\textbf{y}^{s,v}=\textbf{y}^{s}$ and the Lagrangian\\ multipliers;\\
 \hspace{20pt}Repeat   \\ 
 \rowcolor{lightgray} !\$ acc kernels loop independent\\
 \hspace{35pt}$\bullet$ Update $\textbf{y}^{s,v}$ \\ 
\rowcolor{lightgray} !\$ acc kernels loop independent\\
 \hspace{35pt}$\bullet$ Update the Lagrangian multipliers,\\
 \hspace{35pt}$\bullet$ $v=v+1$\\
 \hspace{20pt}Until convergence\\
 \hspace{20pt}$\textbf{y}^{s}=\textbf{y}^{s,v}$\\
 \rowcolor{lightgray} !\$ acc kernels loop independent\\
 \hspace{15pt}Update $\hat{\boldsymbol{\alpha}}$ and $\hat{\boldsymbol{\beta}}$ at $(\textbf{y}^{s})$\\
\hspace{19pt}$s=s+1$\\
 UNTIL convergence \\
 OUTPUT $\textbf{y}^*=\textbf{y}^{s}$\\
 \rowcolor{lightgray} !\$ acc end data
\end{tabular} 
\end{algorithm}

\begin{algorithm}
\renewcommand{\arraystretch}{0.8}
\caption{Transmit Power Allocation pseudo code }
\label{table_algo_3:xdef}
\centering
\begin{tabular}{@{\makebox[3em][r]{\rownumber\space}} | l}
 INITIALIZE $s=0$, $\textbf{p}^{t,s}=\textbf{p}^t$, \\
 $\hat{\alpha}_{m,k}^{(n),s}=1$ and $\hat{\beta}_{m,k}^{(n),s}=0$ $\forall m \in \mathcal{M}, k \in \mathcal{K}, n \in \mathcal{N}$ \\
  (a simple high-SIR approximation)\\
\rowcolor{lightgray} !\$ acc data $\text{copyin(input-list)}$ $\text{copyout(output-list)}$\\
 REPEAT   \\ 
 \hspace{20pt}Initialize $v=0$, $\textbf{p}^{t,s,v}=\textbf{p}^{t,s}$ and the\\  \hspace{20pt} Lagrangian multipliers;\\
 \hspace{20pt}Repeat   \\ 
 \hspace{35pt} do ! Elastic users loop\\
\rowcolor{lightgray} !\$ acc kernels loop independent\\
\rowcolor{lightgray}	 \hspace{40pt}	do ! RRHs loop\\
\rowcolor{lightgray} !\$ acc loop independent\\
\rowcolor{lightgray}	\hspace{45pt}		do ! Subcarriers loop \\
	\hspace{50pt}		Compute the transmit power values \\ 	\hspace{50pt} of  the elastic users using \eqref{eq:elastic_power_vlaues_1}.\\
	\hspace{45pt}		end do\\
 \hspace{34pt}		end do\\
 \hspace{29pt}	end do\\
 \hspace{35pt} do ! Streaming users loop\\
\rowcolor{lightgray} !\$ acc kernels loop independent\\
\rowcolor{lightgray}	\hspace{40pt}	do ! RRHs loop\\
\rowcolor{lightgray} !\$ acc loop independent\\
\rowcolor{lightgray}	\hspace{45pt}	do ! Subcarriers loop \\
	\hspace{50pt}		Compute the transmit power values \\ 	\hspace{50pt} of  the   streaming users using \eqref{eq:streaming_power_vlaues_2}.\\ 
	\hspace{45pt}		end do\\
 \hspace{40pt}		end do\\
 \hspace{35pt}	end do\\
 \hspace{35pt} Update the Lagrangian multipliers by \\  \hspace{35pt}  applying the sub-gradient method,\\
 \hspace{35pt} $v=v+1$\\
 \hspace{20pt}Until $\vert\vert \textbf{p}^{t,s,v}-\textbf{p}^{t,s,v-1}\vert\vert<\varpi_1$\\
 \hspace{20pt}$\textbf{p}^{t,s}=\textbf{p}^{t,s,v}$\\\\
\rowcolor{lightgray} !\$ acc kernels loop independent\\
\rowcolor{lightgray}  \hspace{15pt}Update $\hat{\alpha}_{m,k}^{(n),s+1}$ and $\hat{\beta}_{m,k}^{(n),s+1}$ $\forall m \in \mathcal{M}, k \in \mathcal{K},$ \\ 
\rowcolor{lightgray} \hspace{15pt} $ n \in \mathcal{N}$ at $(\textbf{p}^{t,s})$\\
\hspace{19pt}$s=s+1$\\
 UNTIL $\vert\vert \textbf{p}^{t,s}-\textbf{p}^{t,s-1}\vert\vert<\varpi_2$ or $s=S$ \\
 OUTPUT $\textbf{p}^{t+1}=\textbf{p}^{t,s}$\\
\rowcolor{lightgray} !\$ acc end data
\end{tabular} 
\end{algorithm}

\section{Simulation Results} \label{sec6}
In this section, the performance of the proposed algorithm is presented with different numerical experiments. We consider several LPN RRHs located in the coverage of one HPN RRH with 1 Km diameter.
  The maximum allowable transmit power of the HPN RRH is $p_0^{\text{max}}=42$ dBm while the maximum allowable transmit power of each LPN RRH is $p_m^{\text{max}}=23$ dBm, $\forall m \in \mathcal{M} / \{0\}$. Whereas, the spectral mask of each user over each subcarrier is $p_{m,k}^{(n),\text{mask}}=\frac{p_m^{\text{max}}}{N}$ and the predefined value to end the process of the iterative algorithm is $\xi=0.01$. The noise power density and the weight of each user are $-174$ dBm/Hz and $w_{m,k}=1$, respectively. Moreover, $h_{m,k}^{(n)}=\chi_{m,k}^{(n)}d_{m,k}^{-\psi}$ where $d_{m,k}$ is the distance between the RRH $m$ and the user $k$, $\chi_{m,k}^{(n)}$ is an exponential random variable, i.e., representing the Rayleigh fading and $\psi = 3$ is the path loss exponent.

We suppose that the static circuit power consumption is $P_c^L=0.1$ W and $P_c^H=3$ W for each LPN RRH and HPN RRH, respectively. Moreover, we assume the power efficiency of each LPN RRH  and the HPN RRH to be $\eta_m=2$, $\forall m\neq0$ and $\eta_0=4$, respectively. Furthermore, the fiber link power consumption between each LPN RRH and the BBU pool is $P_f^L=1$ W and between the HPN RRH and the BBU pool is $P_f^H=3$ W. The packet size is 1024 bits and the average queue length, $q_k$, is set to 25 packets.
 
We simulate the cross layer EE resource allocation problem solution using OpenACC compiler directives on GPU.
Algorithm \ref{table_algo_3:xdef} portrays the transmit power allocation algorithm and the parallel fortran pseudo code procedures for each iteration in the iterative algorithm of problem \eqref{eq4:Transformed_Energy_Efficiency_Optimization_Problem_P2_2}  where the output is $\textbf{p}^{t+1}$, for the input $\textbf{p}^{t}$ of iteration $t$. 
It is worth noting  that the loops for updating the transmit power variables, $\hat{\boldsymbol{\alpha}}$ and $\hat{\boldsymbol{\beta}}$  are independent in each iteration. Hence, in order to reduce the processing time,  some lines are added to the code  as described in Section \ref{framework} using the Fortran programming language and offloaded from the host CPU to the GPU. Then the variables are updated at the same time by the streaming multi-core processors of the GPU.

In Figs. \ref{speedup} and \ref{speedup2}, we compare the processing time speed between the serial MATLAB code and the Fortran parallel code implemented on the GPU using OpenACC API for different number of parameters where $K=20$. Fig. \ref{speedup} shows the processing time speed difference for different number of subcarriers and RRHs where a wide range of values is considered. In Fig. \ref{speedup2}, the number of RRHs is fixed to 10. These figures  show that in the worst case, by implementing simulations on GPU using OpenACC API, the processing time speed-up of about 255 times with respect to the serial MATLAB code and in the best case the processing time speed-up of about 1058 times is achieved. The hosting CPU used for our simulation is Intel Core i7-4790 with 4 cores and clock speed of 3.6 GHz and the GPU card is NVIDIA GeForce GTX 760. The GPU’s architecture is Kepler GK104 with 6 streaming multiprocessor each having 192 stream processors (SPs) thus having the total of 1152 SPs or Compute Unified Device Architecture (CUDA) cores. The GPU works at clock rate of 1150 MHz with memory bandwidth of 192.3 GB/s. It is worth mentioning that if we implement the simulations on a GPU card with different specifications then the processing time speed-up may differ. It is important to note that the significant speed-up is achieved while 
using a GPU card which is at the same price range as the hosting CPU that is utilized for our simulation.

\begin{figure}[t]
\centering
\includegraphics[width=0.8\columnwidth]{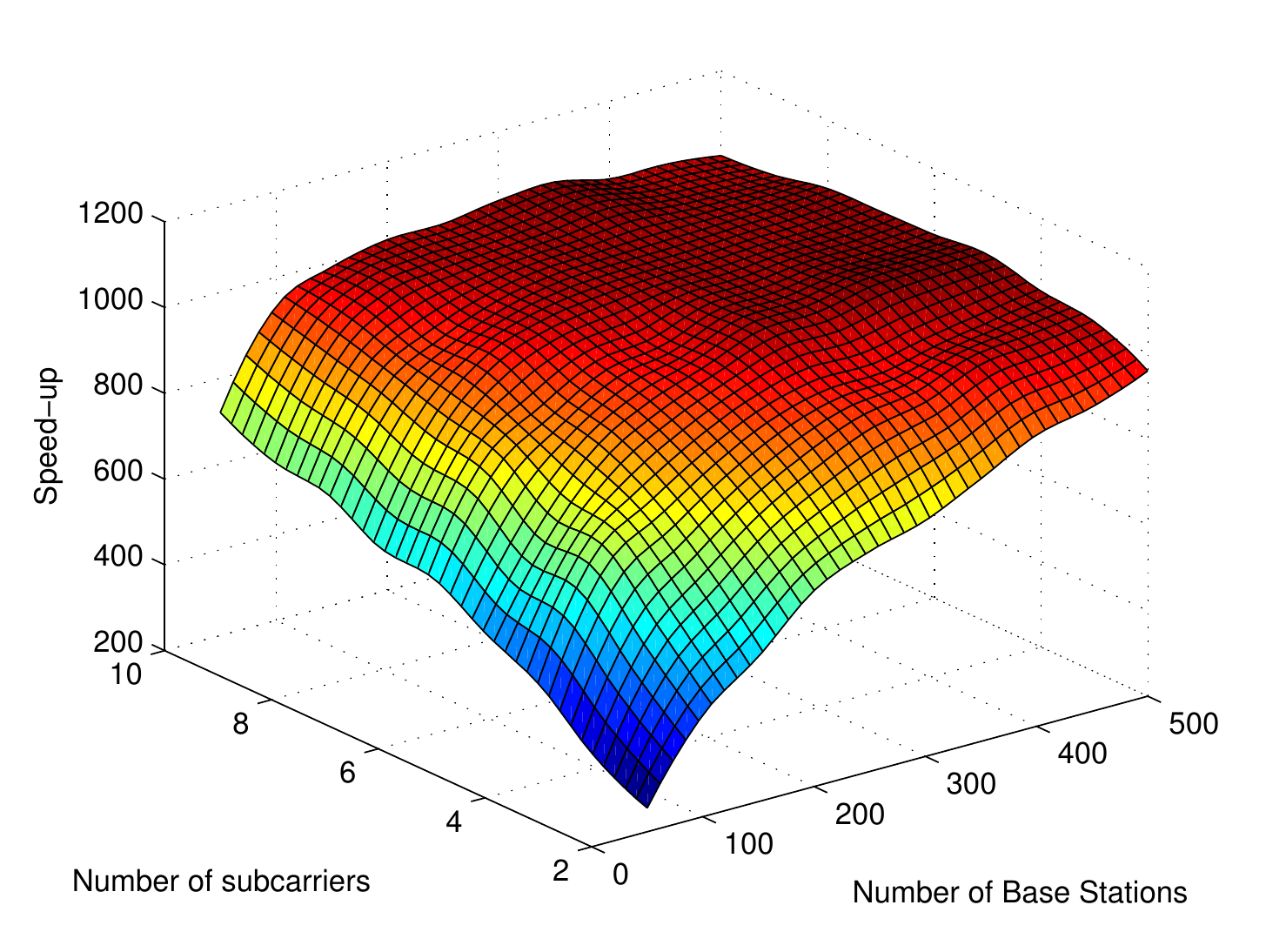}
\caption{Processing time speed comparison between the serial MATLAB code and the parallel Fortran code implemented on the GPU  for different number of parameters.}
\label{speedup}
\end{figure}

\begin{figure}[t]
\centering
\includegraphics[width=0.8\columnwidth]{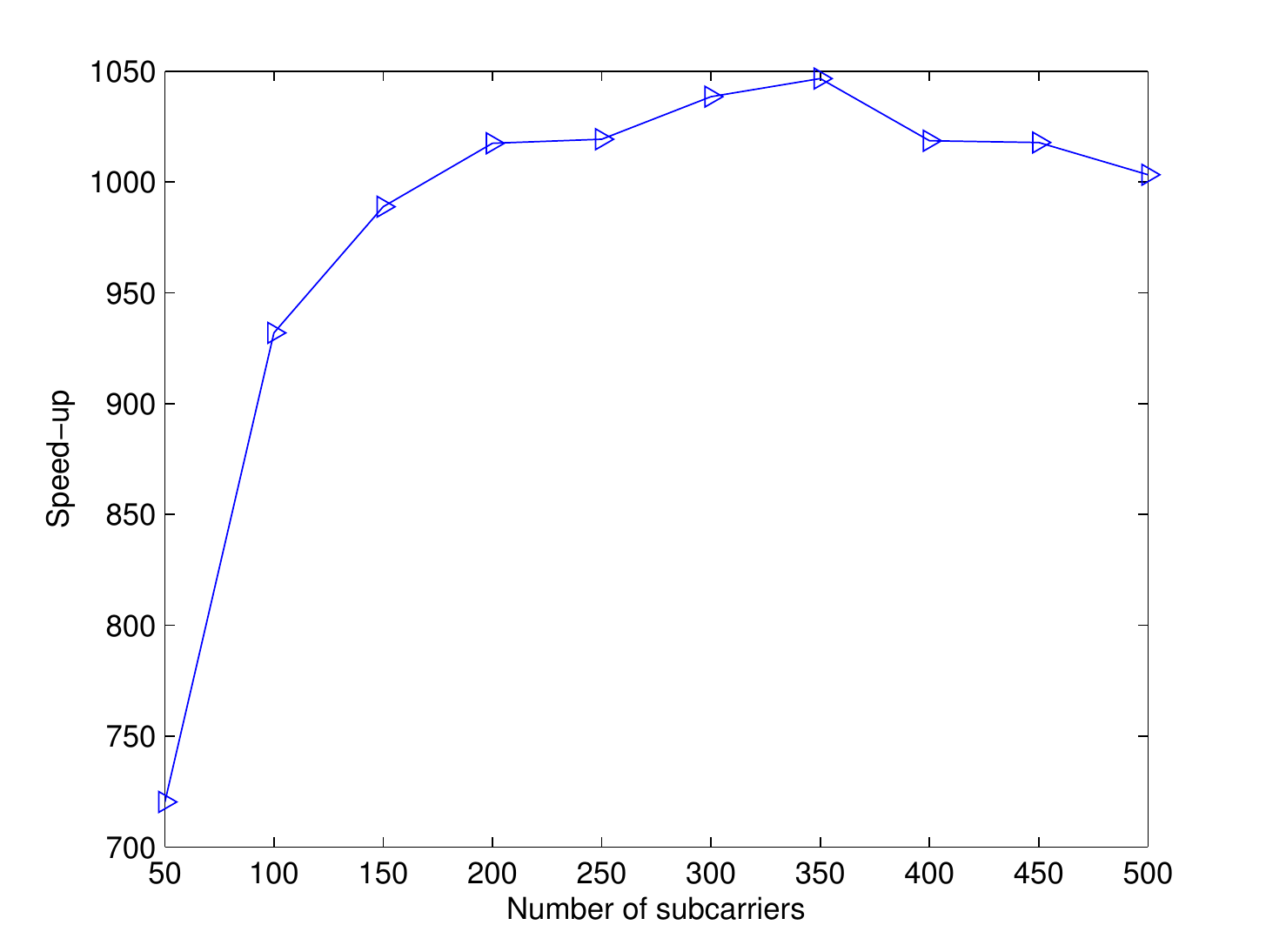}
\caption{Processing time speed comparison between the serial MATLAB code and the parallel Fortran code implemented on the GPU  for different number of subcarriers, $M=10$.}
\label{speedup2}
\end{figure}

In the simulations shown in Figs. \ref{arch}, \ref{streaming users}, \ref{packet arrival rate}, and \ref{users_oma}, we suppose that there are two LPN RRHs installed in the coverage area of the HPN RRH and the total number of subcarriers in each RRH is $N = 32$. Moreover, in the simulations of Figs. \ref{arch}, \ref{streaming users}, \ref{users_oma}, and \ref{femto_oma}, we consider that the packet arrival rate of the streaming users is 125 packets/s. Then, according to \eqref{C9}, the minimum rate requirement to each streaming user is 4.18 bits/s/Hz and  the maximum delay requirement corresponding to each packet arrival rate is $T_k=0.2$ s.

In Fig. \ref{arch}, we compare the energy efficiency of H-CRANs with different conventional, 1-tier C-RAN, 2-tier HCN and 1-tier HPN scenarios. In the 1-tier C-RAN scenario, three LPN RRHs are considered. In the 2-tier HCN, one Micro BS (MBS) and two Pico BSs (PBSs) are considered where the static circuit power consumption for the MBS and each PBS are $P_c^M=10$ W and $P_c^P=6.8$ W, respectively and the power efficiency for each MBS or PBS is $\eta_0=4$. Furthermore, in the 1-tier HPN scenario two MBSs are considered \cite{peng2015energy}. From Fig. \ref{arch}, it is shown that the worst energy efficiency is in the 1-tier HPN scenario while energy efficiency in the 2-tier HCN scenario is better than that in the 1-tier HPN scenario since lower transmit power is required and   higher sum rate is achieved. Moreover, due to the coverage limitation in the 1-tier C-RAN scenario, the energy efficiency in the 1-tier C-RAN scenario is slightly worse than the 2-tier H-CRAN scenario where the best energy efficiency is reached in the 2-tier H-CRAN scenario due to the advantages of the 1-tier C-RAN and the 2-tier HCN architectures. 

\begin{figure}[t]
\centering
\includegraphics[width=0.8\columnwidth]{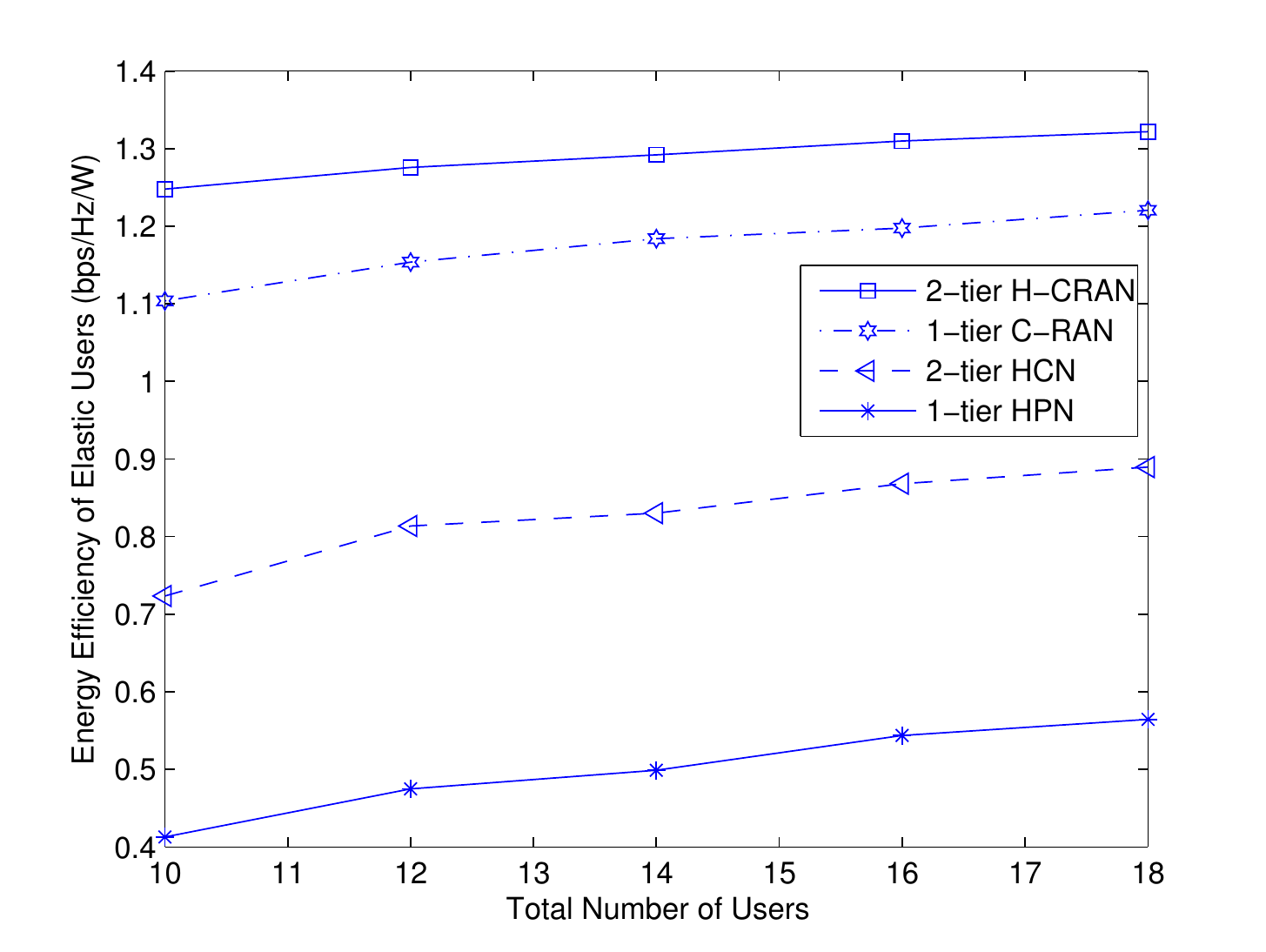}
\caption{The elastic users EE versus the total number of users for different architectures.}
\label{arch}
\end{figure}

The energy efficiency of the elastic users versus the total number of users for various number of streaming users is plotted in Fig. \ref{streaming users}. As it is seen, the energy efficiency of the elastic users increases by increasing the total number of users which means increasing the number of elastic users since the number of the streaming users is fixed and that is due to multi-user diversity gain \cite{mokari2015limited} and \cite{goldsmith2005wireless}. As well, In Fig. \ref{streaming users}, the effect of the streaming traffic is analyzed. It is observed that the energy efficiency  of the elastic users decreases by increasing the number of streaming users.
That is because by increasing the number of streaming users, more rate is required for the streaming users, then, less rate will be allocated to the elastic users which will affect the energy efficiency  of the elastic users.

In Fig. \ref{packet arrival rate}, the effect of the packet arrival rate of streaming traffic is evaluated. The energy efficiency  of the elastic users versus the total number of users for different packet arrival rates of streaming users is plotted where the number of streaming users is fixed to 6. 
By increasing the packet arrival rate of streaming traffic, the minimum required rate of the streaming users is increased then more rate is allocated to the streaming users, therefore, the  energy efficiency  of the elastic users is affected. Thus, due to what is just described, in Fig. \ref{packet arrival rate}, the energy efficiency  of the elastic users decreases by increasing the packet arrival rate of the streaming users.

%
%
%
%
\begin{figure}[t]
\centering
\includegraphics[width=0.8\columnwidth]{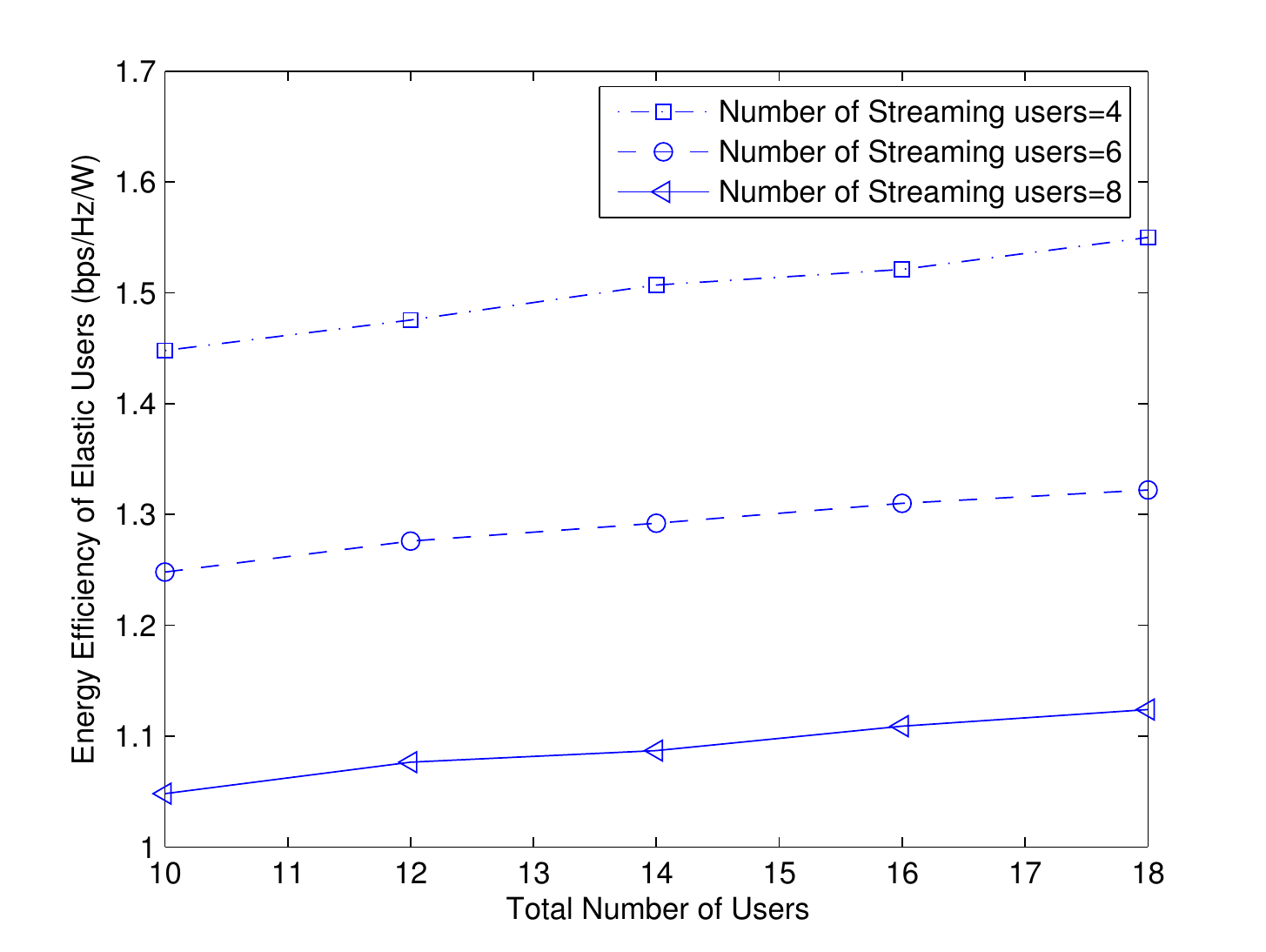}
\caption{The elastic users EE versus the total number of users for different number of streaming users.}
\label{streaming users}
\end{figure}
\begin{figure}[t]
\centering
\includegraphics[width=0.8\columnwidth]{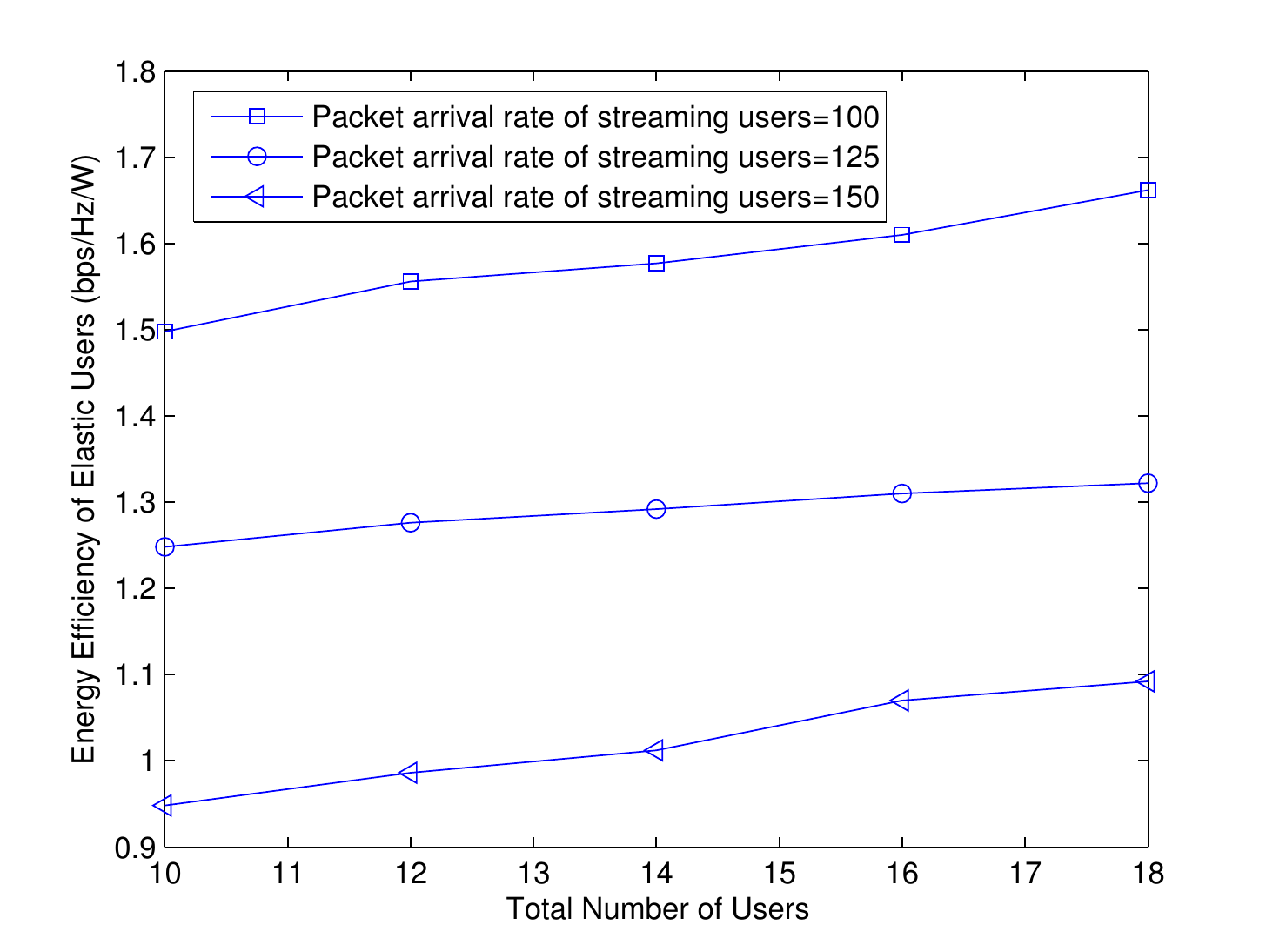}
\caption{The elastic users EE versus the total number of users for different packet arrival rate of streaming users.}
\label{packet arrival rate}
\end{figure} 

Furthermore, In Figs. \ref{users_oma} and \ref{femto_oma}, we compare  the PD-NOMA and OMA based systems where in OMA based system at most one user can be allocated on a subcarrier. In Fig. \ref{users_oma}, the elastic users  energy efficiency  versus the total number of users is evaluated where the number of the streaming users is fixed to 6. In Fig. \ref{femto_oma},  the energy efficiency of the  elastic users    versus the number of LPN RRHs is plotted where the total number of users is 12 which is divided equally between streaming users and elastic users. Clearly, it is observed that the system energy efficiency  based on the PD-NOMA technique is better than that based on OMA. Moreover, from Fig. \ref{femto_oma}, it is seen that by increasing the number of LPN RRHs till $M_f\leq3$ the energy efficiency of the elastic users increases but when the number of the LPN RRH is $M_f>3$ both the total sum rate and the power consumption of the elastic users increase approximately in a linear way. Hence, the energy efficiency almost stays stable. Moreover, the proposed suboptimal solution with low complexity is perfectly close to the optimal solution.

\begin{figure}[t]
\centering
\includegraphics[width=0.8\columnwidth]{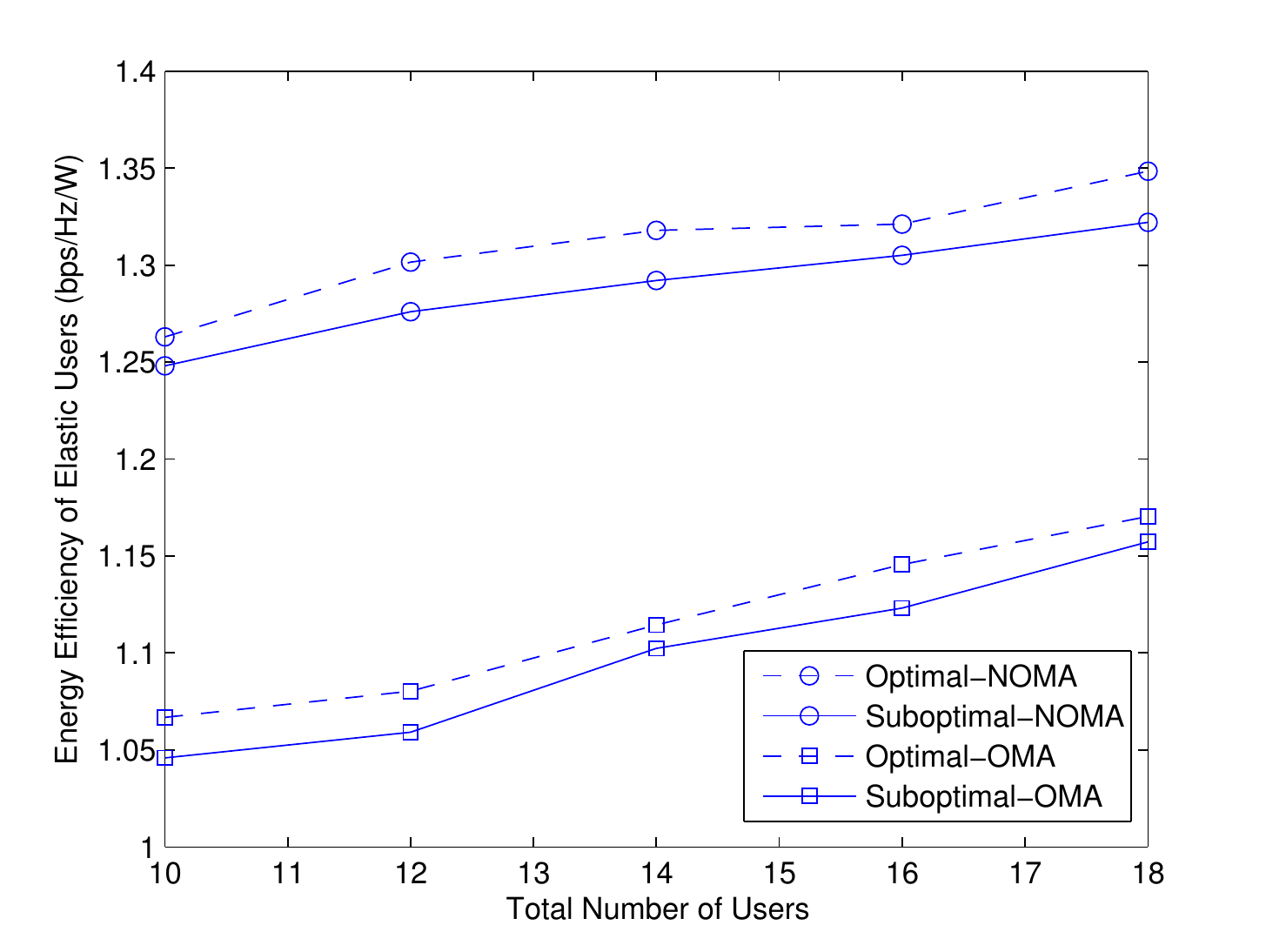}
\caption{The elastic users EE  versus the total number of users for both PD-NOMA and OMA systems.}
\label{users_oma}
\end{figure}

\begin{figure}[t]
\centering
\includegraphics[width=0.8\columnwidth]{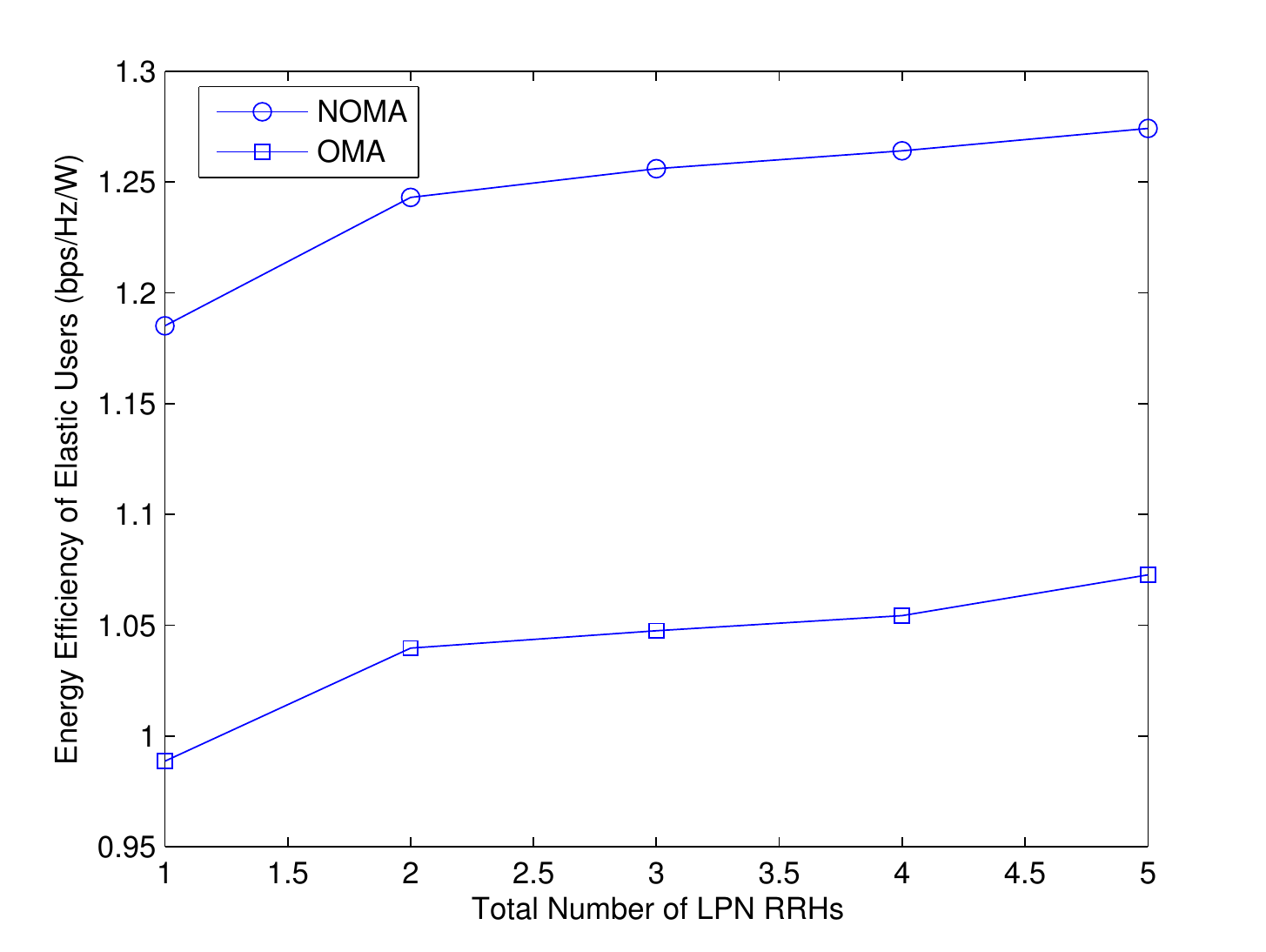}
\caption{The elastic users EE versus number of femtocell RRHs for both PD-NOMA and OMA systems.}
\label{femto_oma}
\end{figure}
\section{Conclusion}\label{sec7}
In this work, we analyzed the performance of the cross layer energy efficiency of   PD-NOMA H-CRANs  with RRH selection for heterogeneous traffic. In particular, we jointly optimized the RRH selection, subcarrier allocation and transmit power allocation subject to the QoS constraints of streaming users, in addition to the subcarrier and transmit power limitations. In the proposed method, the resources are allocated first to the streaming users and the remaining resources, if exist, are assigned to the elastic users. To solve the considered optimization problem, we utilized the SCA method. 
Moreover, we  obtained the
optimal solution of the proposed optimization problem by transforming
it to monotonic optimization problem of the canonical form and then applying the polyblock
algorithm.
Furthermore, we introduced a framework for accelerating SCALE with the Lagrangian method over GPU and  we run the proposed particular optimization problem by utilizing OpenACC API. Simulation results showed that the processing time by using OpenACC API on GPU increased for about 1500 times with respect to that by using MATLAB. As well, numerical experiments confirmed that systems based on the PD-NOMA technique outperforms those based on OMA. Moreover, the energy efficiency in the H-CRAN scenario is shown to perform better than that in the traditional scenarios such as C-RAN, HCN and 1-tier HPN. 

\bibliographystyle{ieeetr}
\bibliography{biblo1}{}
   \begin{IEEEbiography}[{\includegraphics[width=1in,height=1.25in,clip,keepaspectratio]{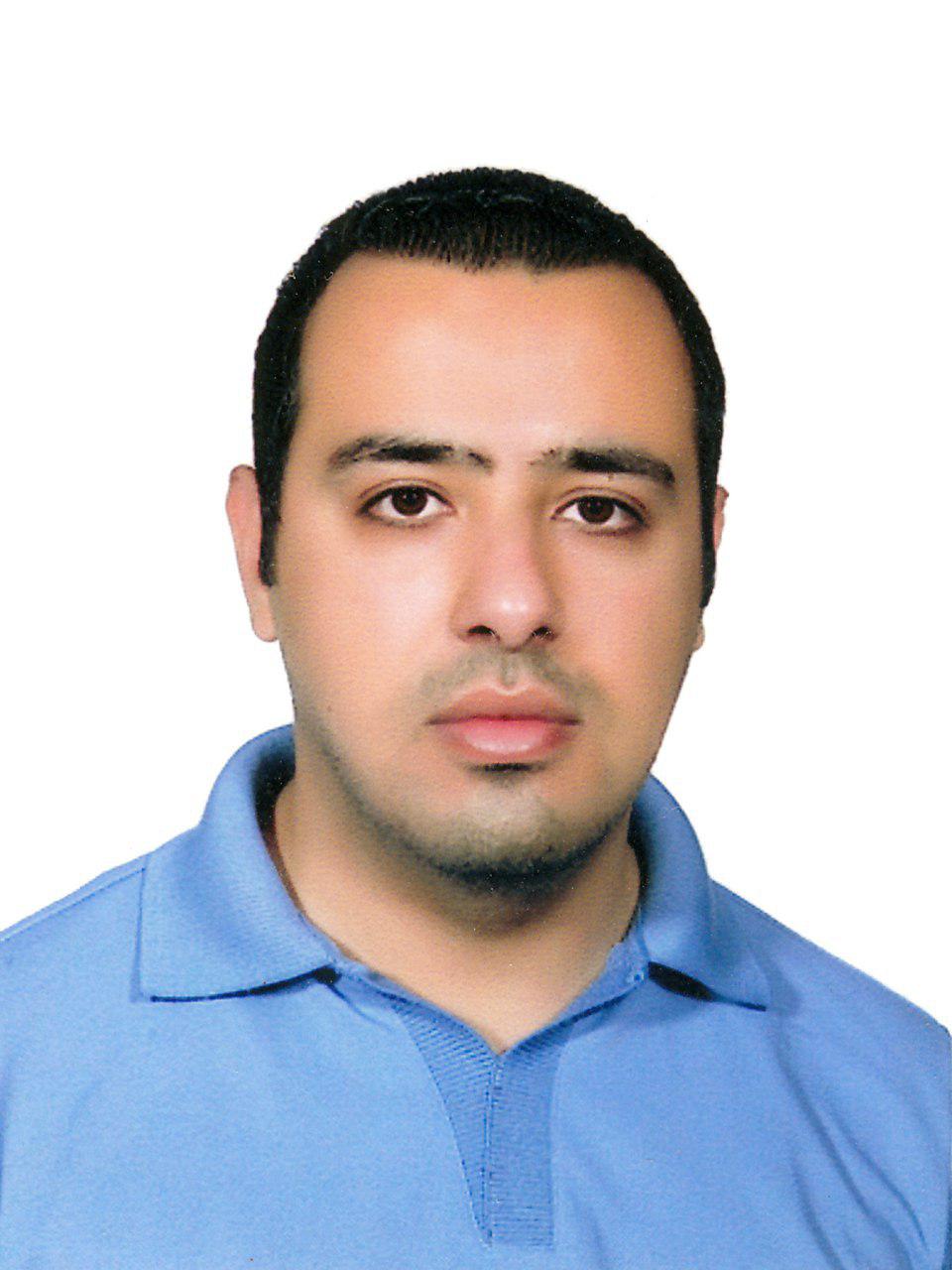}}]{ Ali Mokdad}   received the B.Eng. degree in computer and communication engineering from Islamic University of Lebanon, Beirut, Lebanon, in 2008, the M.Eng. degree in Electrical Engineering - Communication Systems  from Shahed University, Tehran, Iran, in 2013 and the Ph.D degree in Electrical Engineering - Communication Systems from Tarbiat Modares University, Tehran, Iran, in 2017.  His research interests include wireless communications, radio resource allocations and spectrum sharing.
   \end{IEEEbiography}

  \begin{IEEEbiography}[{\includegraphics[width=1in,height=1.25in,clip,keepaspectratio]{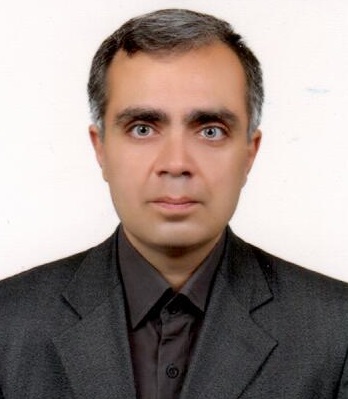}}]{ Paeiz Azmi}  (M'05-SM'10)   received the B.Sc., M.Sc.,
  	and Ph.D. degrees in electrical engineering from
  	Sharif University of Technology (SUT), TehranIran,
  	in 1996, 1998, and 2002, respectively. Since
  	September 2002, he has been with the Electrical
  	and Computer Engineering Department of Tarbiat
  	Modares University, Tehran-Iran, where he became
  	an associate professor on January 2006 and he is a
  	full professor now.
  	His current research interests include modulation and coding techniques,
  	digital signal processing, wireless communications, and estimation and detection
  	theories.
\end{IEEEbiography}

   \begin{IEEEbiography}[{\includegraphics[width=1in,height=1.25in,clip,keepaspectratio]{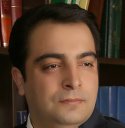}}]{ Nader Mokari }  received the Ph.D. degree in electrical engineering from Tarbiat Modares University, Tehran, Iran, in 2014. He joined the Department of Electrical and Computer Engineering, Tarbiat Modares University, as an Assistant Professor, in 2015. He has been involved in a number of large scale network design and consulting projects in the telecom industry. His research interests include design, analysis, and optimization of communication networks.
   \end{IEEEbiography}
   
 \begin{IEEEbiography}[{\includegraphics[width=1in,height=1.25in,clip,keepaspectratio]{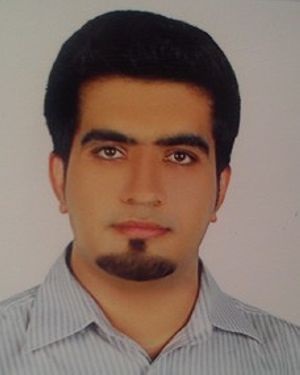}}]{Mohammad Moltafet}   received his M.Sc. degree from Tarbiat Modares University, Tehran, Iran, in Electrical and Computer Engineering in 2015. He is currently working toward the PhD degree in the Department of Electrical and Computer Engineering, Tarbiat Modares University, Tehran, Iran. His current research interests include  wireless communication networks with emphasis on non-orthogonal multiple access (NOMA), and radio resource allocation.
   \end{IEEEbiography}

   \begin{IEEEbiography}[{\includegraphics[width=1in,height=1.25in,clip,keepaspectratio]{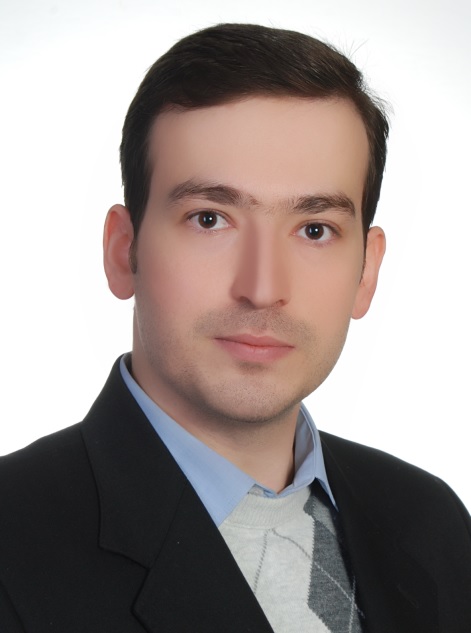}}]{ Mohsen Ghaffari-Miab}    (S'06-M'13) received the B.Sc., M.S., and Ph.D. degrees all in electrical engineering from the University of Tehran, Tehran, Iran, in 2005, 2007, and 2012, respectively.
From 2010 to 2011, he was a Visiting Scholar at the University of Michigan, Ann Arbor, MI, USA. From 2012 to 2013 he was a Postdoctoral Fellow at the University of Tehran. From 2013 to 2014, he was an Assistant Professor at the Department of Engineering Science, University of Tehran. In 2014, he joined the Department of Electrical and Computer Engineering, Tarbiat Modares University as Assistant Professor. His research interests include theoretical and computational electromagnetics, with focus on frequency- and time-domain integral equation-based methods, finite difference-based methods, GPU-based parallel computing, analysis of layered media, scattering and antenna analysis.
   \end{IEEEbiography}

\end{document}